      \def\Gread@@xetex#1{%
        \IfFileExists{"\Gin@base".bb}%
        {\Gread@eps{\Gin@base.bb}}%
        {\Gread@@xetex@aux#1}%
      }
    \definecolor{urlcolor}{rgb}{0,.145,.698}
    \definecolor{linkcolor}{rgb}{.71,0.21,0.01}
    \definecolor{citecolor}{rgb}{.12,.54,.11}
    \definecolor{ansi-black}{HTML}{3E424D}
    \definecolor{ansi-black-intense}{HTML}{282C36}
    \definecolor{ansi-red}{HTML}{E75C58}
    \definecolor{ansi-red-intense}{HTML}{B22B31}
    \definecolor{ansi-green}{HTML}{00A250}
    \definecolor{ansi-green-intense}{HTML}{007427}
    \definecolor{ansi-yellow}{HTML}{DDB62B}
    \definecolor{ansi-yellow-intense}{HTML}{B27D12}
    \definecolor{ansi-blue}{HTML}{208FFB}
    \definecolor{ansi-blue-intense}{HTML}{0065CA}
    \definecolor{ansi-magenta}{HTML}{D160C4}
    \definecolor{ansi-magenta-intense}{HTML}{A03196}
    \definecolor{ansi-cyan}{HTML}{60C6C8}
    \definecolor{ansi-cyan-intense}{HTML}{258F8F}
    \definecolor{ansi-white}{HTML}{C5C1B4}
    \definecolor{ansi-white-intense}{HTML}{A1A6B2}
    \definecolor{ansi-default-inverse-fg}{HTML}{FFFFFF}
    \definecolor{ansi-default-inverse-bg}{HTML}{000000}
    \definecolor{outerrorbackground}{HTML}{FFDFDF}
    \providecommand{\tightlist}{%
      \setlength{\itemsep}{0pt}\setlength{\parskip}{0pt}}
    \let\Oldtex\TeX
    \let\Oldlatex\LaTeX
    \renewcommand{\TeX}{\textrm{\Oldtex}}
    \renewcommand{\LaTeX}{\textrm{\Oldlatex}}
\def\PY@reset{\let\PY@it=\relax \let\PY@bf=\relax%
    \let\PY@ul=\relax \let\PY@tc=\relax%
    \let\PY@bc=\relax \let\PY@ff=\relax}
\def\PY@tok#1{\csname PY@tok@#1\endcsname}
\def\PY@toks#1+{\ifx\relax#1\empty\else%
    \PY@tok{#1}\expandafter\PY@toks\fi}
\def\PY@do#1{\PY@bc{\PY@tc{\PY@ul{%
    \PY@it{\PY@bf{\PY@ff{#1}}}}}}}
\def\PY#1#2{\PY@reset\PY@toks#1+\relax+\PY@do{#2}}
    \definecolor{incolor}{rgb}{0.0, 0.0, 0.5}
    \definecolor{outcolor}{rgb}{0.545, 0.0, 0.0}
\begin{document}
    
    \maketitle

	\begin{abstract}

Central clearing counterparty houses (CCPs) play a fundamental role in mitigating the counterparty risk for exchange traded options. CCPs cover for possible losses during the liquidation of a defaulting member's portfolio by collecting initial margins from their members. In this article we analyze the current state of the art in the industry for computing initial margins for options, whose core component is generally based on a VaR or Expected Shortfall risk measure. We derive an approximation formula for the VaR at short horizons in a model-free setting. This innovating formula has promising features and behaves in a much more satisfactory way than the classical Filtered Historical Simulation-based VaR in our numerical experiments. In addition, we consider the neural-SDE model for normalized call prices proposed by [Cohen et al., arXiv:2202.07148, 2022] and obtain a quasi-explicit formula for the VaR and a closed formula for the short term VaR in this model, due to its conditional affine structure.

\end{abstract}

	\hypertarget{introduction}{%
\section{Introduction}\label{introduction}}

	The counterparty risk for exchange traded options is generally mitigated
thanks to Central Clearing Counterparty houses (CCPs), which take the
role of counterparty for option positions: the CCP becomes the seller in
front of the buyer and the buyer in front of the seller. In case of a
clearing member default, the CCP replaces the defaulting member until
its option positions are distributed among the surviving members through
a liquidation of the portfolio performed through brokers and/or through
an auction. The \(2008\) financial crisis entailed a strengthening of
the regulations for CCPs, requiring very robust risk frameworks in order
to achieve the task of covering potential losses incurred by a default
situation. As an example, the EMIR regulation lists the principles that
a CCP must adopt to safely operate. In particular, in order to cover for
the possible losses due to the liquidation of the defaulting member's
portfolio, the CCP requires from its members to deposit collateral in
form of initial margin, additional margins to cover liquidity and
concentration and/or specific risks, and default fund contribution.

We concentrate on the initial margin, which is supposed to cover for the
losses incurred in case of the liquidation of a given portfolio in
normal market conditions. Article 41 of \cite{EMIR} requires CCPs to
collect margins from the parties entering a transaction in a measure to
be sufficient to cover the CCP potential exposures while liquidating the
position. The margin must also be sufficient to cover at least
\(99.5\%\) of these exposures in the case of OTC derivatives, and
\(99\%\) for other financial products over the Margin Period Of Risk
(MPOR), as recommended in article 24 of \cite{RTS_EMIR}.

Since the drafting of EMIR regulation, CCPs have put in practice
different ways to compute margins for option portfolios. A first
notorious methodology for complex portfolios is the SPAN algorithm of
CME Group, which simulates joint risk scenarios for the underlier and
the implied volatility and infers a conservative margin from these
scenarios. However, this methodology has been overcome by more refined
ones, which in most cases apply a Filtered Historical Simulation
(FHS)-type algorithm \cite{barone1997var} to selected risk factors in
order to generate scenarios consistent with historical moves (examples
are the SPAN2 by CME and the IRM2 by ICE). FHS is widely used among CCPs
but its use on option markets is tricky and questionable. In particular,
a straightforward use of FHS breaks the structural relationships between
risk factors, possibly generating highly implausible scenarios.

Different techniques rather than FHS for options margining have been
studied in theory and eventually implemented, as the procyclicality
control model by Wong and Zhang from OCC \cite{wong2021procyclicality},
which relies on a dynamic scaling factor adjusting the dynamics of the
ATM IV log returns to be higher during low-volatility periods and lower
during high-volatility periods. More academic papers such as
\cite{cont2022simulation, cohen2021arbitrage} also look at the issue of
computing option initial margins, additionally ensuring the absence of
arbitrage for the generated scenarios. Indeed, in
\cite{cont2022simulation} the authors describe a generic algorithm which
penalizes arbitrageable scenarios (in a static sense) which can be
simply upgraded to any scenarios generation algorithm already in
production. In particular, the authors apply it to Generative
Adversarial Networks to simulate arbitrage-free implied volatility
surfaces. In \cite{cohen2021arbitrage}, an affine factor model for
normalized call option prices is firstly defined and then calibrated
minimizing dynamic and static arbitrages. Scenarios are subsequently
generated by neural networks which constrain the paths to live in the
polytope defined by the no static arbitrage conditions. In this
panorama, it is worth including the works by Bergeron at
al.~\cite{bergeron2022variational, ning2021arbitrage} on the Variational
Autoencoders used to reconstruct missing data on implied volatility
surface (eventually with no arbitrage), and which can be tweaked to
simulate scenarios based on historical movements.

In the present work we have two main objectives: the first one is to
provide a practical and concrete panorama in options margining; the
second, more ambitious, is to design a closed formula for the VaR of
option portfolios, which is easy to understand and to implement.
Specifically, we compute a short-term VaR formula which is completely
model-free and coincide with the exact one in the Stochastic Volatility
model and the particular affine factor model for normalized call prices
proposed in \cite{cohen2021arbitrage} as the neural-SDE model. For the
latter model, we show that it is actually possible to directly infer the
VaR formula without any need of simulating scenarios, so that once the
parameters of the model are calibrated, these can be plugged into a
quasi-explicit formula to obtain the required margin. Also, considering
the limit for small time steps, the formula becomes closed and it has
the same form of our short-term model-free formula. Testing the
short-term model-free formula, we obtain well-behaved margins which
actually beat the classic FHS ones in terms of regularity and adaptation
to the market current behavior. For these reasons, we believe that the
suggested short-term model-free formula could lay the foundations to a
practical model-free formula for options margining.

In the first part of this paper, we look at the mechanism of options'
initial margin adopted by CCPs in
\cref{the-mechanism-of-initial-margin-for-options}. In
\cref{initial-margin-for-options-in-the-industry-a-short-survey} we go
into detail in the practical implementations used by CCPs to calculate
initial margins, followed by an assessment of their pros and cons. In
the second part of the paper, we firstly describe the short-term
model-free formula in \cref{a-simple-short-term-model-free-formula} and
secondly derive the closed margin formula in the neural-SDE model for
normalized option prices in
\cref{quasi-explicit-formula-for-the-var-in-the-neural-sde-model}. We
conclude by performing numerical experiments in
\cref{numerical-experiments}.

	\begin{normalsize}
We thank Zeliade Systems and in particular Ismail Laachir for his detailed explanations on how the margining methodology works in CCPs, and Pierre Cohort and Niels Escarfail for being always available for helping with coding. A particular thank goes to Nicol\`o Filippas who, as an intern at Zeliade, has firstly studied and explained Cohen et al. papers to the team, hinting us to deepen our investigation of the model.
\end{normalsize}

We also thank Stefano De Marco who especially helped in the theoretical
adjustment of proofs and in the overall structuring of the article. All
remaining errors are ours.

	\hypertarget{the-mechanism-of-initial-margin-for-options}{%
\section{The mechanism of initial margin for
options}\label{the-mechanism-of-initial-margin-for-options}}

	CCPs charge clearing members, on a daily or intra-daily basis, with
total risk requirements that are computed from initial margins. The
initial margin aims at covering possible losses in the portfolio value
when liquidating it after a default, under normal market conditions, and
it is estimated considering a tail risk.

	Consider a portfolio, at time \(t\), with possibly both long \((L_i)_i\)
and short \((S_j)_j\) option positions (both \(L_i\) and \(S_j\) are
positive) with different strikes and expiries. In the case of default at
time \(t\), the CCP has to liquidate the portfolio in a Margin Period of
Risk (MPOR) of say \(h\) days (\(h\) is usually 2 days for
exchange-traded options). At date \(t+h\), the portfolio could have
undergone market movements, so that the CCP has to estimate its payoff
after liquidation.

The initial margin (IM) is then the Value-at-Risk (VaR) or
Expected-Shortfall (ES) at a confidence level of generally \(0.99\) of
the portfolio predicted P\&Ls:
\[\text{IM}(t) = -\text{VaR}_{0.99}\bigl(\text{P\&L}(t+h)\bigr)\] where
the minus sign ensures a positive margin value.

	At this point, the total risk requirement charged by the CCP does not
solely include the initial margin. Indeed, the CCP eventually adds to
the latter quantity some add-ons to take into account risks that are not
directly related to market moves. Among these, we typically find the
Wrong Way Risk add-on, the liquidity and concentration risk add-on and
possibly other specific add-ons:
\[\text{Total margin}(t) = \text{IM}(t) + \text{Add-ons}(t).\]

	Now, the total margin is floored by the Short Option Minimum (SOM). Deep
short OTM positions have very little risk since they will probably stay
OTM along the MPOR. However, their extreme risk is still not \(0\) and
the methodology should be able to capture it. This is generally not the
case for strikes very far from the ATM, because of the lack of
historical liquid data on these strikes. Then, to assure an enough
conservative margin, a secure floor should be applied to the risk
requirement. The SOM is generally the sum along all short positions of
the calibrated extreme costs for these options:
\[\text{Refined total margin}(t) = \max\bigl(\text{Total margin}(t);\ \text{SOM}(t)\bigr).\]

	At this point, the final total risk requirement is the refined total
margin adjusted by two other terms: minus the Net Option Value (NOV) on
equity-style options (options for which the premium is paid in full at
the settlement date, i.e.~one or two days after the option trade) and
the Unpaid Premium (UP), described below:
\[\text{Total risk requirement}(t) = \max\bigl(\text{Refined total margin}(t) - \text{NOV}(t) + \text{UP}(t);\ 0\bigr)\]
where the floor by zero is to avoid that the CCP pays and \begin{align*}
\text{NOV}(t) &= \sum_i L_i O_i(t) - \sum_j S_j O_j(t)\\
\text{UP}(t) &= \sum_{\text{$i$ unpaid}} L_i O_i(t) - \sum_{\text{$j$ undelivered}} S_j O_j(t)
\end{align*} with \(O\) denoting the option prices.

	For the NOV, let us consider for instance the case of a short option
position, where the option is of equity-style. In the case of a default,
the liquidation of this position would require to buy the option in the
market, which amounts to the CCP paying the option price at the time of
liquidation. This means that the initial margin for a short option
position should aim to cover the largest option price, up to a fixed
confidence level. On the other hand, the liquidation of a long option
position will always result in a positive inflow for the CCP, because
the CCP will sell the long option position and receive the option price.

The reason why the CCP applies the NOV can alternatively be explained
observing that the liquidation at the end of the MPOR, at time \(t+h\),
will result for the CCP in the monetary flow:
\[\text{Liquidation P\&L}(t+h) = \sum_i L_i O_i(t+h) - \sum_j S_j O_j(t+h).\]
The \(\text{Liquidation P\&L}\) can be expressed as the sum of the NOV
and the portfolio's value increment: \begin{align*}
\text{Liquidation P\&L}(t+h) =&\ \Bigl(\sum_i L_i O_i(t) - \sum_j S_j O_j(t)\Bigr) +\\
&+\Bigl(\sum_i L_i \bigl(O_i(t+h)-O_i(t)\bigr) - \sum_j S_j \bigl(O_j(t+h)-O_j(t)\bigr)\Bigr)\\
=&\ \text{NOV}(t) + \text{P\&L}(t+h).
\end{align*} Then, the CCP has to charge to the clearing member minus
the liquidation profit, i.e.~the predicted losses (the initial margin
appropriately adjusted by the add-ons and the SOM) minus the NOV on
equity-style options.

	The UP is charged by the CCP to cover from the risk of default of the
counterparts before the settlement date of the option premium, and it
corresponds to the net position of accrued option premiums which are
still unpaid (because the settlement date has not passed yet). In this
way, the difference between the UP and the NOV can be seen as a
(Contingent) Variation Margin for Options (VMO) not yet settled.

Consider a defaulting clearing member which is long an option before the
settlement date. The CCP will need to pay the premium to its counterpart
in the trade, and this will be done re-selling the option and collecting
its new premium, and using the VMO previously required to the buyer.
This latter component is needed to account for the difference between
the initially established option premium and today's one. Similarly, the
CCP has to liquidate defaulting short positions on not yet settled
options buying the option and delivering it to the buyer counterpart. To
do so, it will use the money from the buyer plus the VMO from the
defaulting seller.

	All in all, the final formula for the total risk requirement is
\[\text{Total risk requirement}(t) = \max\bigl(\max\bigl(\text{IM}(t) + \text{Add-ons}(t);\ \text{SOM}(t)\bigr) - \text{NOV}(t) + \text{UP}(t);\ 0\bigr).\]

In this article we will focus on the IM component of the total risk
requirement.

	\hypertarget{initial-margin-for-options-in-the-industry-a-short-survey}{%
\section{Initial margin for options in the industry: a short
survey}\label{initial-margin-for-options-in-the-industry-a-short-survey}}

	The total risk requirement mechanism and its different layers is
essentially the same across all CCPs, with possible differences in
wording. What really makes the difference among CCPs' requirements is
the way the IMs (and the add-ons) are computed. A notorious parametric
model for margining has been proposed by CME Group under the name of
SPAN.\footnote{\url{https://www.cmegroup.com/clearing/risk-management/span-overview.html}}
It consists in computing the \(\text{P\&L}\)s of the portfolio under
different risk scenarios depending on the combination of underlying
price changes and implied volatility changes. A similar model has been
implemented by ICE with the name of IRM. These models are particularly
tricky and overconservative, and for these reasons nowadays CCPs are
passing to new models. In particular, both SPAN and IRM models have been
upgraded to the corresponding
SPAN2\footnote{\url{https://www.cmegroup.com/clearing/risk-management/span-overview/span-2-methodology.html}}
and
IRM2\footnote{\url{https://www.theice.com/clearing/margin-models/irm-2/methodology}}
models, which both use the Filtered Historical Simulation (FHS)
techniques to create risk scenarios. Indeed, the majority of CCPs is now
adopting the FHS to compute IMs for option portfolios.

	\hypertarget{filtered-historical-simulation}{%
\subsection{Filtered Historical
Simulation}\label{filtered-historical-simulation}}

	The FHS has recently become the standard approach for VaR computations
among CCPs, especially on cash equity markets. The FHS technique is
indeed particularly efficient in cash equity and fixed income markets
for spot instruments, but it becomes more subtle in derivatives
clearing.

	The FHS model is particularly appreciated since it is essentially
data-driven and model-free, and it relies on few requirements to be
satisfied. For a given instrument to be cleared, firstly the CCP must
choose the risk factors which drive its price; let \(r_s\) denote their
returns, either logarithmic, absolute, or relative depending on the risk
factor. A key property that scaled returns must satisfy is stationarity
(see \cite{barone2001non}). Indeed, the FHS model relies on the
hypothesis that risk factors' returns at tomorrow's date \(t+1\) behave
as \[r_{t+1} = \eta\sigma_{t+1}\] where \(\sigma_{t+1}\) is the returns'
simulated conditional volatility at day \(t+1\) and \(\eta\) is drawn
from the historical observations \[\eta_s = \frac{r_s}{\sigma_s}.\] In
other words, the past historical return is \emph{re-contextualized} to
the current volatility context by the FHS devoling/revoling steps.

	Generally, the industry standard is to use an Exponentially Weighted
Moving Average (EWMA) variance estimator for the volatility. A EWMA
volatility with decay factor \(\lambda\) is computed as
\[\text{EWMA}_s = \sqrt{(1-\lambda)(r_s)^2 + \lambda\text{EWMA}_{s-1}^2},\]
with an eventual flooring in case of too low values. Then, the
historical volatility \(\sigma_s\) used to scale historical returns can
be calculated with two possible formulations: \(\sigma_s=\text{EWMA}_s\)
and \(\sigma_s=\text{EWMA}_{s-1}\) respectively. The two alternatives
are discussed in \cite{gurrola2015filtered}, section 7.1, where it is
acknowledged that they will lead to significantly different outcomes.

	When computing the IM for portfolios of options using the FHS
methodology, the CCP has to choose a set of risk factors, assessing the
stationarity property in particular. Together with the underlying value
(and possibly the interest/repo rate), also the Implied Volatility (IV)
has to be taken as a risk factor. Since the IV is actually a surface
which behaves differently depending on the strike and the maturity of
the option, two alternatives can be considered in order to generate IV
scenarios:

\begin{enumerate}
\def\labelenumi{\arabic{enumi}.}
\tightlist
\item
  Identify a fixed two-dimensional grid for the IV surface and define
  each point as a risk factor.
\item
  Choose a model for option prices and take its parameters as risk
  factors.
\end{enumerate}

For deeper insights on the VaR computation for options in a FHS approach
see \cite{gunnarsson2019filtered}.

	\hypertarget{implied-volatility-anchor-points}{%
\subsubsection{Implied Volatility anchor
points}\label{implied-volatility-anchor-points}}

	In the first alternative, the anchor points on the grid can be chosen
with fixed time-to-maturity or fixed rolling index as first coordinate,
and fixed log-forward moneyness, or fixed delta, or (equivalently) fixed
ratio between log-forward moneyness and square-root of time-to-maturity
as second coordinate. Since market data is more dense around the ATM
point for shortest maturities and spreads out for increasing maturities,
the fixed delta grid is generally preferred. Indeed, it implies a grid
in log-forward moneyness with a triangular shape, with a range that
starts from the ATM point and spreads out as the time-to-maturity
increases.

\begin{normalsize}
If choosing a dense grid guarantees a more precise fit of IVs in between the anchor points, it highly decreases computation performances and makes it difficult to identify general historical patterns in the IV surface dynamics. For this reason, Principal Component Analysis (PCA) can be performed in order to model the shifts of the surface. To cite an example, in \cite{yao2017managing}, the authors test the FHS method on the (PCA) principal components in a Karhunen-Lo\`eve decomposition and find that scenarios satisfy the conditions of no butterfly arbitrage (i.e. the requirement that for a fixed maturity, call prices must be non-increasing and convex with respect to the strike).
\end{normalsize}

Once scenarios are generated on the anchor points, the model still needs
an interpolation/extrapolation criterion to predict future prices on
points outside the grid. The criterion could be either a model for the
implied volatility (such as SVI) or for prices (such as SABR), which has
to be calibrated from the scenarios grid, or classic interpolations via
b-splines. The choice can be driven by arguments of non-arbitrability of
prices, or of best fit and computation efficiency of the algorithm.

	\hypertarget{implied-volatility-models}{%
\subsubsection{Implied Volatility
models}\label{implied-volatility-models}}

	In the second alternative, the CCP chooses a pricing model for options
and, once the stationarity property on the model parameters' returns is
verified, performs an FHS on the model parameters.

As an example, the SABR model is an industry standard and it is driven
by three parameters \(\alpha\), \(\beta\) and \(\rho\). Generally, the
\(\beta\) parameter is fixed a priori based on historical observations,
so that only the \(\alpha\) and \(\rho\) parameters need to be
estimated. After showing the stationarity of their returns in the target
market, the CCP can apply the FHS technique on the historical
observations of \(\alpha\) and \(\rho\), and use their drawn values to
simulate future prices.

Similarly, the Stochastic Volatility Inspired (SVI) model by Gatheral is
largely used among CCPs, and also among crypto funds, to model the
implied total variance. Its sub-model Slices SVI (SSVI) is sometimes
preferred since it has more tractable arbitrage-free requirements and
since it still fits data pretty well. SSVI has three parameters
\(\theta\), \(\varphi\) and \(\rho\) per each maturity, so that if the
stationarity of their returns is verified, the FHS technique can be
applied to obtain simulated prices. An example of this application can
be found in \cite{gunnarsson2019filtered}.

Lastly, a model which is sometimes considered is the Gaussian lognormal
mixture model as described by Glasserman and Pirjol in
\cite{glasserman2021w}. It consists in a convex combination of
Black-Scholes prices and the number of parameters depends on chosen
number of basis prices. Even though it is easy to implement, it
guarantees no arbitrage for slices and it has very good fitting ability,
the model is not easy to extend to full surfaces and it hides potential
issues when extrapolating in extreme events. Indeed, one it has the
theoretical property to have the same constant asymptotic level in the
two wings of the smile (Proposition 5.1 of \cite{glasserman2021w}), so
that while the calibration of market smiles could suggest a decreasing
shape, the calibrated smile with a lognormal Gaussian mixtures would
necessarily increase at large strikes, a pure model artefact. As a
consequence, while calibration fit could be good for liquid market data
(concentrated around the ATM point), in contexts such as the computation
of tail risks as in margins, the extrapolation at extreme strikes would
be misleading in those circumstances.

	\hypertarget{limitations-of-the-fhs}{%
\subsubsection{Limitations of the FHS}\label{limitations-of-the-fhs}}

	The FHS technique works well when a large history of risk factors is
stored, which is in fact a first immediate practical limitation. Indeed,
the possible number of scenarios for the FHS cannot be larger than the
available history, since the normalized returns are drawn from past
observations.

A second important drawback of the FHS methodology for complex products
is the capture of the joint dynamics of risk factors. Indeed, in the FHS
model, risk factors are re-scaled according to their own intrinsic
volatility, without any reference to other risk factors and this may
cause a discrepancy in the relationships between the risk factors, and
in particular in their correlations, as explained in section 7.2 of
\cite{gurrola2015filtered}. Furthermore, while the property of
stationarity of the normalized returns of single assets is generally
historically satisfied, this is more hardly the case for the returns of
IV points, resulting in more unstable and unnatural results for the FHS
methodology.

	Thirdly, FHS is relatively straightforward to implement, as far as the
risk factors under study do not have structural relationships which
could be destroyed by the core FHS algorithm. Unfortunately, this is
exactly the case for futures' curves and implied volatility surfaces.

Indeed, in the case of futures' curves, the FHS simulation considers a
set of fixed pillars (i.e.~futures' time-to-maturities) of the curves
today and apply the re-scaled corresponding past returns. For each
simulation, the simulated vector of futures values on the fixed pillars
should be consistent between the spot returns and the future returns.
However, this consistency is not guaranteed by the FHS simulations.

Similarly, when using the IV anchor points as risk factors, even if the
calibrated volatility surfaces are perfectly calibrated and
arbitrage-free, the volatility surfaces obtained by an FHS procedure
have no reason to be arbitrage-free in turn (and in general will not be,
because arbitrage-free surfaces do not have nice additive or
multiplicative properties). Furthermore, IV anchor points returns are
generally considered in absolute terms, which could cause negative
simulated implied volatilities. Flooring the latter quantities to \(0\)
is not a good choice, since: 1) prices for zero volatility are always
strictly lower than the market prices for European options; 2) since
call option prices are decreasing functions of the strike, a zero
volatility for an OTM call implies that all the calls with the same
maturity and larger strikes should also have a zero volatility, so that
also simulated implied volatility smiles should satisfy this property.

The possibility of generating scenarios such that each matrix of prices
indexed by the moneyness and time-to-maturity grid is arbitrage-free is
essentially an open question. A recent article \cite{cont2022simulation}
describes a weighted Monte-Carlo algorithm which penalizes arbitrageable
scenarios to obtain arbitrage-free simulations with higher probability.
We explain the model in \cref{arbitrage-free-simulations-for-options}.
Another alternative is to use parametric models of IV surfaces, for
which no-arbitrage conditions are available, and work at the level of
the parameters of such models. Yet, randomizing the model parameters may
produce a lot of instability. A noteworthy attempt is the neural-SDE
model of \cite{cohen2021arbitrage}, that we investigate further below,
which provides a consistent framework for this purpose.

	Finally, the FHS methodology is known to be procyclical as shown for
example in section 6 of \cite{gurrola2015filtered}. Procyclicality has
to be avoided because it implies margins which react too abruptly to
market changes, and this may cause liquidity issues to the clearing
members who have to post the corresponding collateral.

	\hypertarget{the-procyclicality-control-by-wong-and-zhang-options-clearing-corporation}{%
\subsection{The procyclicality control by Wong and Zhang (Options
Clearing
Corporation)}\label{the-procyclicality-control-by-wong-and-zhang-options-clearing-corporation}}

	Even though the FHS model is the most popular among CCPS, in the recent
years some new models for the options clearing are born and CCPs are
starting to look at these alternatives. An important feature in margin
requirements that CCPs should always try to mitigate is procyclicality
and we have seen that FHS does not properly satisfy this requirement.
The EMIR Regulatory Technical Standards of 2013 dedicates Article 28 to
the procyclicality control, detailing specific actions that CCPs have to
adopt for its limitation.

With this in mind, Wong and Zhang from the Options Clearing Corporation
(OCC) choose a model for options initial margin (see
\cite{wong2021procyclicality}) that guarantees to control procyclicality
thanks to a dynamic scaling factor that behaves as an inverted S-curve
and is higher during low-volatility periods and lower during
high-volatility ones.

	The model specifies the log-returns of the ATM IV at expiry \(T_j\) by
\begin{equation}\label{eqOCCModel}
\log\frac{\sigma_{t+h}(T_j,F_t(T_j))}{\sigma_t(T_j,F_t(T_j))} := \eta_t \Bigl(\frac{T_j}{T_1}\Bigr)^{-\alpha} \sqrt{h} z_t
\end{equation} where \(z_t\) is a normalized innovation, centered with
unit variance, \(F_t(T_j)\) is today's forward for maturity \(T_j\), and
\(T_1\) is the first quoted expiry. The factor \(\eta_t\) in turn is a
dynamic rescaling of the CBOE VVIXSM (VVIX), in particular
\[\eta_t = D(\sigma_t) \text{VVIX}_t\] where \(\sigma_t\) is the S\&P500
ATM IV of the short-term expiry (or any reference expiry, like the
one-month), and the scaling factor \(D(\sigma_t)\) is a sigmoid
function, which models a state transition from a risk point of view:
\[D(\sigma_t) = L+\frac{H}{1 + \exp{(\kappa(\sigma_t-\sigma^*))}}.\]
Here \(L\) is the minimum of the ratio between the long-term mean of the
historical vol-of-vol and the VVIX, \(H\) is the difference between the
maximum and the minimum of the latter ratio, \(\kappa\) is the growth
rate of the curve, and \(\sigma^*\) is the sigmoid's midpoint.

	The IV surface is recovered from the ATM IV dynamics considering the
second order approximation in \(\log\frac{K}{F_t(T)}\):
\[\sigma_t(T,K) \approx \sigma_t(T,F_t(T)) + \Sigma_t(T)\log\frac{K}{F_t(T)} + C_t(T)\Bigl(\log\frac{K}{F_t(T)}\Bigr)^2,\]
where \(\Sigma_t(T)\) and \(C_t(T)\) are respectively the ATM skew and
the ATM curvature.

In this way, knowing the distribution of \(z_t\) allows to perform
simulations of implied volatility surfaces and to compute an empirical
VaR.

Observe that the dynamics of the implied volatility in \cref{eqOCCModel}
are modeled for fixed strike and expiry, i.e.~for a fixed contract. This
differs with the majority of other models, whose dynamics are defined
for fixed time-to-maturity and log-forward moneyness.

	\hypertarget{arbitrage-free-simulations-for-options}{%
\subsection{Arbitrage-free simulations for
options}\label{arbitrage-free-simulations-for-options}}

	When computing an IM, the priority of the CCP is to be conservative
enough to cover for members' defaults, while not requiring too high
margins to keep its competitiveness in the market and avoiding
procyclicality. For this reason, arbitrage-free requirements are not
necessarily taken into account as seen for the FHS methodology. However,
simulating reliable scenarios (and so scenarios with no arbitrage)
allows to estimate more plausible margins, and avoids the pitfall of
paying for implausible scenarios.

	The article \cite{cont2022simulation} describes a cunning way to compute
an empirical VaR tweaking the simulations from any model in favor of
arbitrage-free simulations. The arbitrage considered in the article is
the static arbitrage, that, in case of options, can arise in both the
direction of time-to-maturity and the direction of log-forward
moneyness. Arbitrage-free call prices should:

\begin{enumerate}
\def\labelenumi{\arabic{enumi}.}
\tightlist
\item
  lie between the discounted intrinsic value (computed with respect to
  the forward) and the discounted forward;
\item
  increase in time-to-maturity at a fixed moneyness;
\item
  decrease in log-forward moneyness at a fixed time-to-maturity;
\item
  be a convex function of the log-forward moneyness.
\end{enumerate}

Note that in the article, the authors only address the last three
points, but the methodology can be easily extended to include the first
one.

Per each arbitrage situation, a penalization function is defined,
depending only on the normalized call prices surface on a fixed
time-to-maturity and log-forward moneyness discrete grid. Penalization
functions are null in case of no arbitrage and increase their value with
increasing arbitrageable grid points. The target arbitrage penalty
function is the sum of the three penalization functions, and it is null
if and only if the discrete call prices are free of arbitrage.

	At this point, the VaR calculation algorithm is straightforward:

\begin{enumerate}
\def\labelenumi{\arabic{enumi}.}
\tightlist
\item
  Simulate scenarios using the chosen initial model.
\item
  Per each simulated scenario:

  \begin{itemize}
  \tightlist
  \item
    evaluate the arbitrage penalty function;
  \item
    compute its weight inversely proportional to the arbitrage penalty
    function.
  \end{itemize}
\item
  Compute empirical VaR under the probability measure resulting from
  weights.
\end{enumerate}

Since weights prioritize arbitrage-free scenarios, the VaR calculation
hangs towards more reliable and possible values.

	The methodology holds for any model that simulates scenarios. It can
then be applied to both FHS and Monte-Carlo simulation models. In
particular, the authors apply it to a non-parametric generative model
for implied volatility surfaces called VolGAN (section 6 of
\cite{cont2022simulation}).

	\hypertarget{the-neural-sde-model}{%
\subsection{The neural-SDE model}\label{the-neural-sde-model}}

	In \cite{cohen2022estimating}, Cohen at al.~show very good empirical
results on options' VaR estimation. The results are based on a specific
model that the authors introduce in \cite{cohen2021arbitrage}, which
consists in a representation of normalized call prices via non-random
linear functions of some risk factors \(\xi_t\). The articles focus on
how to calibrate and consequently generate arbitrage-free call prices
surfaces via neural networks for the dynamics under consideration.

	In the neural-SDE model, the normalized call prices (i.e.~call prices
divided by the forward and discount factor) are affinely decomposed into
time-independent non-random surfaces \(G_i\) and time-dependent
stochastic combining factors \(\xi_{t,i}\in\mathbb R^d\):
\begin{equation}\label{eqCReisModel}
\begin{aligned}
c_t(\tau, k) &= G_0(\tau, k)+G(\tau,k)\cdot\xi_{t}\\
&= G_0(\tau, k)+\sum_{i=1}^{d}G_i(\tau,k)\xi_{t,i}
\end{aligned}
\end{equation} where \(\tau\) is the time-to-maturity and \(k\) is the
log-forward moneyness. The underlying asset \(S_t\) and the
time-dependent vector \(\xi_t\) evolve as \begin{equation}\label{eqSXi}
\begin{aligned}
dS_t &= \alpha(\xi_t)S_tdt+\beta(\xi_t)S_tdW_{0,t} & S_0=s_0\in\mathbb{R},\\
d\xi_t &= \mu(\xi_t)dt+\sigma(\xi_t)\cdot dW_t & \xi_0=\zeta_0\in\mathbb{R}^d,
\end{aligned}
\end{equation} where \(W_0\in\mathbb{R}\),
\(W=(W_1,\dots,W_d)^T\in\mathbb{R}^d\) are independent standard Brownian
motions under real-world measure \(\mathbb{P}\), and the hypothesis for
the existence and uniqueness of the processes hold,
i.e.~\(\alpha(\xi_t)\in L_\text{loc}^1(\mathbb{R})\),
\(\mu(\xi_t)\in L_\text{loc}^1(\mathbb{R}^d)\),
\(\beta(\xi_t)\in L_\text{loc}^2(\mathbb{R})\),
\(\sigma(\xi_t)\in L_\text{loc}^2(\mathbb{R}^{d\times d})\).

	Starting from these assumptions, the factors are decoded using different
PCA-based techniques to also ensure that the reconstructed prices are
more likely to be arbitrage-free both in a static and in a dynamic
sense. Absence of dynamical arbitrage is ensured through
Heath-Jarrow-Morton-type conditions while absence of static arbitrage is
ensured by imposing that each discretized normalized call prices'
surface respects a set of linear conditions \(Ac\geq b\) for some matrix
\(A\) and vector \(b\) (see \cite{cohen2020detecting}). Notice that
since the decomposition of the normalized call prices is affine and the
no static arbitrage conditions are linear, it is possible to rewrite the
latter conditions for \(\xi_t\) as
\(A\cdot G\cdot\xi_t\geq b - A\cdot G_0\).

	Given the history of market call prices, the factors \(G_i\) can be
calibrated for every grid point \((\tau_j,k_j)\) and factors
\(\xi_{s,i}\) for every past day \(s\) under the no arbitrage
constraints.

After the factors decoding, Cohen at al.~set up a supervised learning
process to estimate
\(\alpha(\xi_t), \beta(\xi_t), \mu(\xi_t), \sigma(\xi_t)\) via a maximal
likelihood function which ensures that the time series for \(\xi_t\)
evolves inside the convex polytope generated by the no static arbitrage
conditions.

	\hypertarget{empirical-var-in-the-neural-sde-model}{%
\subsubsection{Empirical VaR in the neural-SDE
model}\label{empirical-var-in-the-neural-sde-model}}

	Suppose we want to compute the VaR of a portfolio constituted of call
options at MPOR date \(t+h\), where \(h=n\delta t\) and \(\delta t\) is
the one day unit.

	Having the drift and diffusion functions for the time series of
\(\xi_t\) and the underlier from the model calibration, predictions can
be made with an Euler scheme in a Monte-Carlo fashion. In particular
(with an abuse of notation), processes values for the first step at
\(t+\delta t\) are \begin{equation*}
\begin{aligned}
S_{t+\delta t} &= S_t\exp\biggl(\Bigl(\alpha_t-\frac{\beta_t^2}2\Bigr) \delta t + \beta_t(W_{0,t+\delta t}-W_{0,t})\biggr),\\
\xi_{t+\delta t} &= \xi_t + \mu_t \delta t + \sigma_t(W_{t+\delta t}-W_t),
\end{aligned}
\end{equation*} where \begin{align*}
W_{0,t+\delta t}-W_{0,t} &= \sqrt{\delta t}X_0,\\
W_{t+\delta t}-W_t &= \sqrt{\delta t}X,
\end{align*} with independent standard Gaussian random variables
\(X_0\in\mathbb R\), \(X\in\mathbb R^d\).

In order to guarantee more stability of simulations, a tamed Euler
scheme can also be implemented.

The following steps are performed as above, using the latest values of
\(S\) and \(\xi\). At each step, new parameters \(\alpha\), \(\beta\),
\(\mu\) and \(\sigma\) can be estimated using the neural network
algorithm implemented in \cite{cohen2021arbitrage}.

Alternatively, assuming \(\alpha\), \(\beta\), \(\mu\) and \(\sigma\) to
be constant between \(t\) and \(t+h=t+n\delta t\), simulations for
\(S_{t+h}\) and \(\xi_{t+h}\) can be faster computed as \begin{equation}
\begin{aligned}
S_{t+h} &= S_t\exp\biggl(\Bigl(\alpha_t-\frac{\beta_t^2}2\Bigr) h + \beta_t(W_{0,t+h}-W_{0,t})\biggr),\\
\xi_{t+h} &= \xi_t + \mu_t h + \sigma_t(W_{t+h}-W_t).
\end{aligned}
\end{equation}

The predicted values of \(\xi_{t+h}\) can then be used to compute
predicted values of normalized call prices using \cref{eqCReisModel},
which can be de-normalized using the predicted values of \(S_{t+h}\).

The number of simulations that can be performed is arbitrary, so that a
stable value of the VaR can be computed as the empirical quantile of
simulated call prices.

	\hypertarget{limitations-of-the-neural-sde-model}{%
\subsubsection{Limitations of the neural-SDE
model}\label{limitations-of-the-neural-sde-model}}

	In this calibration routine of the neural-SDE model, there is an
important point which, according to us, should be taken into
consideration in applications: the \(G\) parameters are calibrated on
the history of market prices, but given their linear role in the
normalized call prices, there is little hope that a long history of call
prices will be well explained by the very same \(G\) factors. Indeed,
normalized call prices in this model are random linear combinations of
\emph{fixed} surfaces, so that one should probably expect the call
prices to maintain these fixed parameters for no more than a typical
period of one month or so, after which they should be re-calibrated. In
\cite{cohen2021arbitrage}, the \(G\) parameters are calibrated on a
\(17\)-years history, which might be far from being realistic in
practice. As a consequence, calibration fit is not as good as in other
more dynamic models. As an example, the mean absolute percentage error
(MAPE) computed by the authors in Table 2 of \cite{cohen2022estimating}
using two components of \(\xi_t\) is around \(4.61\%\) and \(5.40\%\),
while in our tests limiting the calibration window of \(G\) to one month
reduces the MAPE to about \(1.5\) percentage points. On the other hand,
it is not possible to simply calibrate the \(G\) parameters on shorter
periods of past history, since then the neural-SDE on the \(\xi_t\)
cannot be properly trained to estimate the model parameters, given the
too low amount of historical data.

To some extent, there is therefore a trade-off between the stationarity
of the model and its relevance - note though that one could argue that
this is a general situation for any model.

Furthermore, this stationarity of parameters is likely to be related to
the low procyclicality of the obtained VaR estimations that the authors
claim: because the \(G\) parameters are the same since several years,
the initial margin is indeed automatically less reactive to market
changes.

This being said, the neural-SDE model provides a consistent and
tractable framework which seems to us very promising.

	\hypertarget{the-market-data-in-input-of-the-margin-computation-and-marketmodel-add-ons}{%
\subsection{The market data in input of the margin computation, and
Market/Model
add-ons}\label{the-market-data-in-input-of-the-margin-computation-and-marketmodel-add-ons}}

	The models described above for margin computations (FHS, arbitrage-free
GANs, neural-SDE) have all in common the generation of scenarios for the
risk factors. In the case of options, these scenarios can only be
generated after an initial calibration of market prices using any
internal model, calibration which will then be reversed to get simulated
prices. Indeed, the CCP needs a model and/or an interpolation scheme to
get prices at any time-to-maturity and log-forward moneyness, and this
scheme is used since the beginning of the IM computation. As a result,
margins are based on model prices (i.e.~prices calibrated/interpolated
with the selected scheme), rather than market prices, and should then be
adjusted by a term taking into consideration how the initial discrepancy
between market and model prices propagates when computing the IM.

There can be two approaches to this issue in the context of a VaR-type
model:

\begin{enumerate}
\def\labelenumi{\arabic{enumi}.}
\tightlist
\item
  Apply the scenarios to the calibrated model prices, thus obtaining
  shocked model prices, and assume that the model P\&Ls are a good
  representative of the market P\&Ls, along each scenario. This means
  that the calibration error is assumed to be the same at the current
  date and at the future date along the shocked scenario.
\item
  Compute a \emph{Market/Model add-on}, which incorporates risk coming
  from the fact that the model which has been used to estimate the IM
  does not perfectly match market prices. Since the final risk
  requirement is computed on model prices and captures future movements
  of model prices, so that it could differ from the actual requirement
  needed for market prices, the Market/Model add-on estimates how large
  the difference between the market IM and the model IM is and adds it
  to the final requirement.
\end{enumerate}

	In the second approach, market P\&Ls can be decomposed in \(3\) terms:

\begin{itemize}
\tightlist
\item
  the difference between the portfolio price under the calibrated model
  and its market price: \(P^\text{mod}_t-P^\text{mkt}_t\);
\item
  the difference between the portfolio model prices along the scenario
  \(s\): \(\tilde P^\text{mod}_{t+h, s}-P^\text{mod}_t\);
\item
  the difference between the portfolio price under the calibrated model
  and its market price at the simulated date along the scenario \(s\):
  \(\tilde P^\text{mkt}_{t+h, s}-\tilde P^\text{mod}_{t+h, s}\).
\end{itemize}

The first of the \(3\) terms above is known and can be readily computed;
the second term is computed in the IM; the third term depends on each
scenario and upon the assumption on the distance between the market and
model prices at the future simulated date along each scenario. The
Market/Model add-on aims at covering this third source of risk.

	\hypertarget{a-simple-short-term-model-free-formula}{%
\section{A simple short-term model-free
formula}\label{a-simple-short-term-model-free-formula}}

	In this section we describe a new short-term model-free formula for
options VaR, which only depends on market data and does not need any
model-specific calibration. The short-term attribute depends on the fact
that approximations are performed in the MPOR component, so that the
shorter the MPOR, the more precise is the formula.

	In the following we will denote with \(\text{DF}_t(\tau)\) and
\(F_t(\tau)\) the discount factor and the forward value for
time-to-maturity \(\tau\) evaluated at time \(t\). We work under the
hypothesis of known constant rates between today date \(t\) and the MPOR
date \(t+h\), so that for a given time-to-maturity, discount factors are
constant and forward values are proportional to the underlier \(S_t\).
In particular we write \(F_t(\tau) = f(\tau)S_t\). We call \(\delta t\)
the one day unit and consider an MPOR \(h=n\delta t\) of \(n\) days.
Finally, we denote respectively by \(p_Y\) and \(F_Y\) the probability
density function and the cumulative density function of a generic random
variable \(Y\). The cumulative density function and the probability
density function of a standard Gaussian random variable are denoted with
\(\Phi\) and \(\varphi\) respectively. Also, when considering the
distribution of the underlier \(S_{t+h}\) at time \(t+h\), we actually
mean the distribution conditional to quantities at time \(t\)
(i.e.~\(S_t\) and other risk factors \(\xi_t\)).

	In the following sections we will always consider a portfolio of Vanilla
calls with price at time \(t\) given by \begin{equation*}
\Pi_t(S_t,\xi_t) = \sum_i\pi_iC\Bigl(T_i-t,\log\frac{K_i}{f(T_i-t)S_t};S_t,\xi_t\Bigr)
\end{equation*} where \(C(\tau,k;S_t,\xi_t)\) is a generic model price
of a call with time-to-maturity \(\tau\) and log-forward moneyness
\(k\), depending on the current value of the underlier \(S_t\) and of
the other possible risk factors \(\xi_t\) (as for example the implied
volatility in the short-term model-free case). The P\&Ls are defined as
the finite differences of the portfolio over the MPOR:
\[\text{P\&L}:=\Pi_{t+h}(S_{t+h},\xi_{t+h})-\Pi_t(S_t,\xi_t).\]

The \(h\)-days VaR at confidence level \(\theta\) (close to \(1\)) of
the portfolio is the quantity \(v(\theta,h)\) such that
\[P\bigl(\text{P\&L}\leq v(\theta,h)\bigr) = 1-\theta.\] Sometimes we
will need to develop the above expression using conditional
probabilities. In particular, it holds
\begin{equation}\label{eqIntPPnLCondS}
\begin{aligned}
P\bigl(\text{P\&L}\leq v(\theta,h)\bigr) &= E[\mathbbm{1}_{\text{P\&L}\leq v(\theta,h)}]\\
&= E\bigl[E[\mathbbm{1}_{\text{P\&L}\leq v(\theta,h)}| S_{t+h}]\bigr]\\
&= \int_0^\infty P\bigl(\text{P\&L}\leq v(\theta,h)|S_{t+h}=s\bigr)\,dF_{S_{t+h}}(s).
\end{aligned}
\end{equation} In the case of existence of a probability function for
\(S_{t+h}\), the latter expression can be written as
\[P\bigl(\text{P\&L}\leq v(\theta,h)\bigr) = \int_0^\infty p_{S_{t+h}}(s)P\bigl(\text{P\&L}\leq v(\theta,h)|S_{t+h}=s\bigr)\,ds.\]

	\hypertarget{the-black-scholes-case-and-the-stochastic-volatility-case}{%
\subsection{The Black-Scholes case and the Stochastic Volatility
case}\label{the-black-scholes-case-and-the-stochastic-volatility-case}}

	Before introducing the short-term model-free VaR formula, we firstly
look at some prototypical examples such as the Black-Scholes and the
Stochastic Volatility cases.

	In the classic Black-Scholes case, the underlier is a geometric Brownian
motion \[dS_t = \alpha_tS_tdt + \beta_tS_tdW_t\] under the real-world
probability measure. Applying Ito's lemma, portfolio prices are
processes such that
\[d\Pi_t(S_t) = \biggl(S_t\alpha_t\frac{d}{dS_t}\Pi_t(S_t) + \frac{1}{2}S_t^2\beta_t^2\frac{d^2}{dS_t^2}\Pi_t(S_t)\biggr)dt + S_t\beta_t\frac{d}{dS_t}\Pi_t(S_t)dW_t.\]
Writing \(dW_t\) as \(\sqrt h X\) where \(X\) is a standard Gaussian
random variable and approximating the above expression at the first
order we have that the \(\text{P\&L}\)s have the form
\[\Pi_{t+h}(S_{t+h})-\Pi_t(S_t)\approx S_t\beta_t\frac{d}{dS_t}\Pi_t(S_t)\sqrt{h}X.\]
Then, it is easy to compute the VaR with a first order approximation:
\[P\bigl(\text{P\&L}\leq v(\theta,h)\bigr) = P\biggl(\frac{d}{dS_t}\Pi_t(S_t)X\leq \frac{v(\theta,h)}{S_t\beta_t\sqrt h}\biggr)\]
so that
\[v(\theta,h) = \Phi^{-1}(1-\theta)S_t\beta_t\Bigl|\frac{d}{dS_t}\Pi_t(S_t)\Bigr|\sqrt h.\]

	The above reasoning can actually be generalized to Stochastic Volatility
models where the volatility of the underlier is a stochastic process
with volatility \(\sigma_t\): \begin{align*}
dS_t &= \alpha_tS_tdt + \xi_tS_tdW_{0,t}\\
d\xi_t &= \mu_tdt + \sigma_tdW_t\\
dW_{0,t}dW_t &= \rho_tdt.
\end{align*} In the above formulation we have dropped the dependency of
volatility parameters in the volatility itself,
i.e.~\(\mu_t=\mu_t(\xi_t)\) and \(\sigma_t=\sigma_t(\xi_t)\). Indeed, in
order to guarantee the positivity of the volatility there must be such a
dependency. Applying Ito's lemma to the portfolio \(\Pi_t(S_t,\xi_t)\)
of option prices generated by the pricing version of the Stochastic
Volatility model, one finds
\[d\Pi_t(S_t,\xi_t) = a_tdt + S_t\xi_t\frac{d}{dS_t}\Pi_t(S_t,\xi_t)dW_{0,t} + \sigma_t\frac{d}{d\xi_t}\Pi_t(S_t,\xi_t)dW_t\]
where \begin{align*}
a_t =&\ \alpha_tS_t\frac{d}{dS_t}\Pi_t(S_t,\xi_t) + \mu_t\frac{d}{d\xi_t}\Pi_t(S_t,\xi_t) + \frac{\xi_t^2S_t^2}{2}\frac{d^2}{dS_t^2}\Pi_t(S_t,\xi_t) +\\
&+ \frac{\sigma_t^2}{2}\frac{d^2}{d\xi_t^2}\Pi_t(S_t,\xi_t) + \xi_tS_t\sigma_t\rho_t\frac{d^2}{dS_td\xi_t}\Pi_t(S_t,\xi_t).
\end{align*} Considering the finite increments of the portfolio and
neglecting linear terms for \(h\) going to \(0\), the form of the
\(\text{P\&L}\)s becomes
\[\Pi_{t+h}(S_{t+h},\xi_{t+h})-\Pi_t(S_t,\xi_t)\approx S_t\xi_t\frac{d}{dS_t}\Pi_t(S_t,\xi_t)\sqrt{h}X_0 + \sigma_t\frac{d}{d\xi_t}\Pi_t(S_t,\xi_t)\sqrt{h}X,\]
where \(X_0\) and \(X\) are standard jointly normal random variables
with correlation \(\rho_t\) equal to the correlation of the two Brownian
motions. Then, any combination of \(X_0\) and \(X\) is still normal and
the VaR of the portfolio is \begin{equation}\label{eqGenericVaR}
v(\theta,h) = \Phi^{-1}(1-\theta)\sqrt{\Bigl(S_t\xi_t\frac{d}{dS_t}\Pi_t(S_t,\xi_t)\Bigr)^2 + \Bigl(\sigma_t\frac{d}{d\xi_t}\Pi_t(S_t,\xi_t)\Bigr)^2 + 2\rho_tS_t\xi_t\sigma_t\frac{d}{dS_t}\Pi_t(S_t,\xi_t)\frac{d}{d\xi_t}\Pi_t(S_t,\xi_t)}\sqrt h.
\end{equation}

	\hypertarget{a-short-term-model-free-formula}{%
\subsection{A short-term model-free
formula}\label{a-short-term-model-free-formula}}

	Driven by the results in the Black-Scholes and the Stochastic Voaltility
case, we generalize the VaR formulas to a short-term model-free formula
which can be applied to any historical series of spot and option prices.

	With this aim, we rather work using the implied volatility, which can
always be computed from market prices using a root-finding algorithm
applied to the classic Black-Scholes pricing formula
\[\text{BS}_t\bigl(k, \tau, \omega, F_t(\tau), \text{DF}_t(\tau), \sigma^\text{imp}_t(k,\tau)\bigr) = \omega\text{DF}_t(\tau)F_t(\tau)\bigl(\Phi(\omega d_1) - e^k\Phi(\omega d_2)\bigr)\]
where \(k=\log\frac{K}{F_t(\tau)}\) is the log-forward moneyness,
\(\tau=T-t\) is the time-to-maturity,
\[d_{1,2} = -\frac{k}{\sigma_t^\text{imp}(k,\tau)\sqrt\tau} \pm \frac{\sigma_t^\text{imp}(k,\tau)\sqrt\tau}2\]
and \(\omega=+1\) if the option is a call, \(-1\) if it is a put.

	When computing risks, the implied volatility
\(\sigma^\text{imp}_t(k,\tau)\) is generally considered as a risk factor
together with the underlier \(S_t\). For this reason, we write it as a
function of a driving factor \(\xi_t\):
\(\sigma^\text{imp}_t(k,\tau) = \sigma(k,\tau,\xi_t)\), so that the
dynamics of the two risk factors are \begin{equation}
\begin{aligned}
dS_t &= \alpha_tS_tdt + \beta_tS_tdW_{0,t}\\
d\xi_t &= \mu_tdt + \eta_tdW_t\\
dW_{0,t}dW_t &= \rho_tdt.
\end{aligned}\tag{H1}
\end{equation}

Observe that the implied volatility risk factor depends on the
log-forward moneyness and the time-to-maturity rather than the contract
strike and its maturity. Indeed, the time series of a fixed contract is
available since its issue date and is then limited in time. Furthermore,
we do not expect its implied volatility to have any nice statistical
property of stationarity that could legitimate drawing meaningful
forecasts for its historical returns between time \(t\) and \(t+h\). On
the contrary, we expect that the market encode the implied volatility
risk rather in a log-forward moneyness, time-to-maturity map, meaning
that the time series of the implied volatility at a fixed point in this
relative grid will have much nicer features.

	Let us consider a portfolio of calls and puts written as Black-Scholes
functions: \begin{equation}\label{eqPiBS}
\begin{aligned}
\Pi_t(S_t,\xi_t) &= \sum_i\pi_i\text{BS}_t\biggl(\log\frac{K_i}{f(T_i-t)S_t}, T_i-t, \omega_i, F_t(T_i-t), \text{DF}_t(T_i-t), \sigma\Bigl(\log\frac{K_i}{f(T_i-t)S_t},T_i-t,\xi_t\Bigr)\biggr)\\
&=: \sum_i\pi_i\text{BS}_t^i.
\end{aligned}
\end{equation} Repeating the steps in
\cref{the-black-scholes-case-and-the-stochastic-volatility-case}, the
approximated formula for the VaR becomes
\[v(\theta,h) = \Phi^{-1}(1-\theta)\sqrt{\Bigl(S_t\beta_t\frac{d}{dS_t}\Pi_t(S_t,\xi_t)\Bigr)^2 + \Bigl(\eta_t\frac{d}{d\xi_t}\Pi_t(S_t,\xi_t)\Bigr)^2 + 2\rho_tS_t\beta_t\eta_t\frac{d}{dS_t}\Pi_t(S_t,\xi_t)\frac{d}{d\xi_t}\Pi_t(S_t,\xi_t)}\sqrt h.\]
This formula is far from being model-free for two reasons:

\begin{itemize}
\tightlist
\item
  The term \(\frac{d}{dS_t}\Pi_t(S_t,\xi_t)\) is the full derivative of
  the portfolio \(\Pi_t\) with respect to \(S_t\), which also involves
  the derivative of prices with respect to the implied volatility, since
  it depends on \(k=\log\frac{K}{f(\tau)S_t}\). As a consequence, it
  does not correspond to the Black-Scholes delta and its expression must
  be made explicit.
\item
  The term \(\frac{d}{d\xi_t}\Pi_t(S_t,\xi_t)\) is the derivative of the
  portfolio with respect to \(\xi_t\), and it does not coincide with
  what the market indicates with vega, i.e.~the sensibility of the
  portfolio to the option volatility.
\end{itemize}

Given the above, we shall rather develop the dynamics of the portfolio
as a function of \(S_t\) and \(\sigma(k,\tau,\xi_t)\) where \(k\) also
depends on \(S_t\).

	Observe that for fixed \(k\) and \(\tau\), we have
\begin{equation}\label{eqSigmaImplFalseDiff}
\begin{aligned}
d\sigma_t &= \partial_\xi\sigma_t\,d\xi_t\\
&= \mu_t\partial_\xi\sigma_t\,dt + \eta_t\partial_\xi\sigma_t\,dW_t
\end{aligned}
\end{equation} where \(\sigma_t=\sigma(k,\tau,\xi_t)\). We define
\(\zeta_t(k,\tau,\xi_t):=\eta_t\partial_\xi\sigma(k,\tau,\xi_t)\).

On the other hand, writing \(k=\log\frac{K}{f(\tau)S_t}\) and
\(\tau=T-t\), we rather find \begin{equation}\label{eqSigmaImplDiff}
\begin{aligned}
d\sigma_t &= a_t\,dt - \frac{\partial_k\sigma_t}{S_t}\,dS_t + \partial_\xi\sigma_t\,d\xi_t\\
&= \bigl(a_t - \alpha_t\partial_k\sigma_t + \mu_t\partial_\xi\sigma_t\bigr)\,dt - \beta_t\partial_k\sigma_t\,dW_{0,t} + \zeta_t\,dW_t
\end{aligned}
\end{equation} where
\(a_t=-\partial_\tau\sigma_t+\partial_k\sigma_t\frac{\partial_\tau f}{f}+\frac{\beta_t^2}2(\partial^2_k\sigma_t+\partial_k\sigma_t)+\frac{\eta_t^2}2\partial^2_\xi\sigma_t-\rho_t\beta_t\eta_t\partial_\xi\partial_k\sigma_t\)
and \(\sigma_t = \sigma\bigl(\log\frac{K}{f(T-t)S_t},T-t,\xi_t\bigr)\).

Using \cref{eqSigmaImplDiff} and ignoring the terms in \(h\) in the
finite scheme of the portfolio increments, the \(\text{P\&L}\)s assume
the form
\[\Pi_{t+h}(S_{t+h},\xi_{t+h})-\Pi_t(S_t,\xi_t) \approx \beta_t\bigl(S_t\sum_i\pi_i\partial_S\text{BS}_t^i - \sum_i\pi_i\partial_k\sigma_t^i\partial_\sigma\text{BS}_t^i\bigr)\sqrt h X_0 + \sum_i\pi_i\zeta_t^i\partial_\sigma\text{BS}_t^i\sqrt h X\]
where
\(\sigma_t^i = \sigma\bigl(\log\frac{K_i}{f(T_i-t)S_t},T_i-t,\xi_t\bigr)\),
\(\zeta_t^i = \zeta_t\bigl(\log\frac{K_i}{f(T_i-t)S_t},T_i-t,\xi_t\bigr)\),
and \(X_0\) and \(X\) are standard jointly normal random variables with
correlation \(\rho_t\).

	Let us denote \(\Delta_t^i:=\partial_S\text{BS}_t^i\) and
\(\mathcal V_t^i:=\partial_\sigma\text{BS}_t^i\). Using the same proof
as in the Stochastic Volatility case of
\cref{the-black-scholes-case-and-the-stochastic-volatility-case}, we
finally find a VaR formula on an MPOR horizon of \(h\) days of the form:
\begin{equation}\label{eqModelFreeVaR}
\begin{aligned}
\text{VaR}_{\theta,t}(h) &= \Phi^{-1}(1-\theta)\sqrt{c_t^2 + q_t^2 + 2\rho_tc_tq_t}\sqrt h\\
c_t &= \beta_t\bigl(S_t\sum_i\pi_i\Delta_t^i - \sum_i\pi_i\mathcal V_t^i\partial_k\sigma_t^i\bigr)\\
q_t &= \sum_i\pi_i\zeta_t^i\mathcal V_t^i.
\end{aligned}\tag{Short-term formula}
\end{equation}

	This expression is actually model-free. Indeed, the terms \(\Delta_t^i\)
and \(\mathcal V_t^i\) are respectively the Black-Scholes delta and vega
of the \(i\)-th option in the portfolio. In particular they correspond
to \begin{align*}
\Delta_t\bigl(k, \tau, \omega, \sigma_t^\text{imp}(k,\tau)\bigr) &= \omega\Phi(\omega d_1),\\
\mathcal V_t\bigl(k, \tau, F_t(\tau), \text{DF}_t(\tau), \sigma_t^\text{imp}(k,\tau)\bigr) &= \text{DF}_t(\tau)F_t(\tau)\varphi(d_1)\sqrt\tau.
\end{align*}

The volatility \(\beta_t\) of the underlying spot \(S_t\) can be
computed looking at historical moves. For example, it could be a EWMA
volatility on log-returns appropriately rescaled by the square-root of
the returns' distance \(h_r\) (of for example one trading day):
\(\beta_t = \frac{\text{EWMA}(r_{S,t})}{\sqrt{h_r}}\) where
\(r_{S,t} = \log\frac{S_t}{S_{t-h_r}}\).

Given \cref{eqSigmaImplFalseDiff}, the quantity \(\zeta_t^i\) is simply
the vol-of-vol evaluated in
\(\bigl(\log\frac{K_i}{f(T_i-t)S_t},T_i-t,\xi_t\bigr)\) and it could be
also computed as a EWMA volatility on historical absolute returns of the
implied volatility surface at the fixed log-forward moneyness and
time-to-maturity grid point, rescaled by the square-root of \(h_r\). For
liquidity reasons, it is also possible to approximate the latter
quantity as the vol-of-vol at the \(1\text{M ATM}\) point, times an
appropriate factor (see
\cref{coverage-performances-of-the-short-term-model-free-var} for the
description of a possible way to calibrate such a factor). In
\cite{wong2021procyclicality}, the authors suggest to consider the
historical series of the \(1\text{M ATM}\) implied volatility point. A
less procyclical alternative identified by the authors consists in
rescaling the VVIX historical value with a sigmoid function which
ensures a smooth vol-of-vol transition between high and low volatility
regimes.

The correlation parameter \(\rho_t\) can be computed using a EWMA
correlation between spot log-returns and absolute IV returns, where the
IV point considered can be again the \(1\text{M ATM}\) point.

Lastly, the derivative \(\partial_k\sigma_t^i\) is the derivative of the
smile with respect to the log-forward moneyness, evaluated in
\(\bigl(\log\frac{K_i}{f(T_i-t)S_t},T_i-t,\xi_t\bigr)\). Since options
are quoted in strike and maturity rather than log-forward moneyness and
time-to-maturity, observe that
\[\partial_k\sigma_t^i = \partial_k\sigma_t\biggl(\log\frac{K_i}{f(T_i-t)S_t},T_i-t,\xi_t\biggr) = K\partial_K\tilde\sigma^\text{imp}_t(K_i,T_i)\]
where
\(\tilde\sigma^\text{imp}_t(K,T)=\sigma^\text{imp}_t\bigl(\log\frac{K}{f(T-t)S_t},T-t\bigr)\).
The derivative of the strike smile can be recovered by simple
interpolation of market data (for example, using cubic B-splines or
arbitrage-free smile models), or by finite differences of market data.

	\hypertarget{t-student-short-term-model-free-var-formulation}{%
\subsubsection{t-Student short-term model-free VaR
formulation}\label{t-student-short-term-model-free-var-formulation}}

	When calculating risk, a large majority of financial players consider
the distribution of the returns of an underlier \(S_t\) to be t-Student.
The reason is linked to the shape of the probability functions of such
distribution, which are fatter, compared to a classic normal
distribution. In this way the importance of extreme events is higher and
this guarantees a larger conservativeness of the risk model.

	In this section we consider then a t-Student distribution for the
relative returns \(\frac{S_{t+h}-S_t}{S_t}\), with \(\nu_t\) degrees of
freedom, location parameter \(\alpha_th\) and scale parameter
\(\beta_t\sqrt h\). In particular, we consider the model
\begin{equation}\label{eqH1TStudent}
S_{t+h} = S_t(1+\alpha_th + \beta_t T_{t+h})\tag{H1}
\end{equation} where \(T_{t+h}\in\mathbb R\) is a t-Student with
\(\nu_t\) degrees of freedom, null mean and variance equal to \(h\).
Then, the probability density function of \(S_{t+h}\) is
\begin{equation}\label{eqPdfSTstudent}
p_{S_{t+h}}(s) = \frac{\Gamma\bigl(\frac{\nu_t+1}2\bigr)}{\Gamma\bigl(\frac{\nu_t}2\bigr)S_t\beta_t\sqrt{\pi h}} \biggl(1+\frac1{\nu_t}\biggl(\frac{s-S_t(1+\alpha_th)}{S_t\beta_t\sqrt h}\biggr)^2\biggr)^{-\frac{\nu_t+1}2}.
\end{equation}

Also, for every strike \(K\) and maturity \(T\), denoting \(\tau=T-t\),
\(k(s)=\log\frac{K}{f(\tau)s}\), we consider the increment
\(\Delta\sigma_t(k(S_t),\tau):=\sigma_{t+h}(k(S_{t+h}),\tau-h)-\sigma_t(k(S_t),\tau)\)
conditional to \(S_{t+h}\) to be a Gaussian random variable with mean
\(m_t(S_{t+h},k(S_t),\tau)\) and variance
\(\zeta_t(k(S_t),\tau)^2(1-\rho^2_t)h\), with
\[m_t(s,k(S_t),\tau) = \mu_t(k(S_t),\tau)h + \frac{s-S_t(1+\alpha_th)}{\beta_tS_t}\zeta_t(k(S_t),\tau)\rho_t.\]
In other words, we write the conditional implied volatility increments
as \begin{equation}\label{eqH2TStudent}
\Delta\sigma_t(k(S_t),\tau)|(S_{t+h}=s) = m_t(s,k(S_t),\tau) + \zeta_t(k(S_t),\tau)\sqrt{1-\rho^2_t}\sqrt{h} X\tag{H2}
\end{equation} where \(X\) is a standard Gaussian random variable.

\begin{remark}

Since with the above hypothesis we only know the distribution of the implied volatility increments conditional to $S_{t+h}$, it could seem difficult to calibrate parameters $\mu_t$, $\zeta_t$ and $\rho_t$ on market data. However, given a conditional distribution, it is easy to recover the moments of the marginal distribution using the tower property in expectations. Indeed, moments up to the second order of $\Delta\sigma_t(k(S_t),\tau)$ (without the conditioning to $S_{t+h}$) are
\begin{align*}
E[\Delta\sigma_t(k(S_t),\tau)] &= \mu_t(k(S_t),\tau)h\\
\text{Var}[\Delta\sigma_t(k(S_t),\tau)] &= \zeta_t(k(S_t),\tau)^2h\\
\text{Corr}[S_{t+h}, \Delta\sigma_t(k(S_t),\tau)] &= \rho_t.
\end{align*}
This allows to easily calibrate parameters based on the historical mean and variance of $\Delta\sigma_t(k(S_t),\tau)$.

\end{remark}

	In the next paragraphs, we justify the following formula for the
\(h\)-days VaR with confidence level \(\theta\) under
\cref{eqH1TStudent,eqH2TStudent}:
\begin{equation}\label{eqModelFreeVaRTstudent}
\begin{aligned}
\text{VaR}_{\theta,t}(h) &= F_Z^{-1}(1-\theta)\sqrt{c_t^2 + q_t^2 + 2\rho_tc_tq_t}\sqrt h\\
c_t &= \beta_t\bigl(S_t\sum_i\pi_i\Delta_t^i - \sum_i\pi_i\mathcal V_t^i\partial_k\sigma_t^i\bigr)\\
q_t &= \sum_i\pi_i\zeta_t^i\mathcal V_t^i
\end{aligned}\tag{Short-term t-Student}
\end{equation} where \begin{equation}\label{eqZTStudent}
Z = \frac{q_t\sqrt{1-\rho^2_t}X + (c_t+q_t\rho_t)Y}{\sqrt{c_t^2+q_t^2+2\rho_tc_tq_t}}
\end{equation} and \(X\) is a standard Gaussian random variable and
\(Y\) is a standard t-Student with \(\nu_t\) degrees of freedom
independent of \(X\).

	Quantities that enter \cref{eqModelFreeVaRTstudent} are the same as in
the Gaussian case: the Black-Scholes Greeks delta \(\Delta_t^i\) and
vega \(\mathcal V_t^i\), the volatility \(\beta_t\) of the underlying
spot \(S_t\), the vol-of-vol \(\zeta_t^i\), the correlation \(\rho_t\),
and the derivative \(\partial_k\sigma_t^i\) of the smile with respect to
the log-forward moneyness. See \cref{a-short-term-model-free-formula}
for a description of how to compute these quantities in practice from
market data.

	\begin{remark}

\cite{berg2010density} shows in Theorem 1 that the probability density function of $Z$ is
$$p_Z(z) = \sum_{k=0}^\infty\phi_k^{(\nu_t,\gamma)}g_{k,a_1}(z)$$
where $a_1=\frac{q\sqrt{1-\rho_t^2}}{\sqrt{c_t^2+q_t^2+2\rho_tc_tq_t}}$, $\gamma = \frac{c_t+q_t\rho_t}{q_t\sqrt{2(1-\rho_t^2)}}$ and
\begin{align*}
\phi_k^{(\nu_t,\gamma)} &= \frac{\Gamma\bigl(k+\frac{1}2\bigr)}{k!\Gamma\bigl(\frac12\bigr)\Gamma\bigl(\frac{\nu_t}2\bigr)}\int_0^\infty \exp(-f)f^{\frac{\nu_t-1}2}\bigl(f+\gamma^2\bigr)^{-k-\frac12}\,df\\
g_{k,a_1}(z) &= \frac{\Gamma\bigl(\frac{1}2\bigr)}{\Gamma\bigl(k+\frac12\bigr)a_1\sqrt{2\pi}}\biggl(\frac{z^2}{2a_1^2}\biggr)^k\exp\biggl(-\frac{z^2}{2a_1^2}\biggr).
\end{align*}

Quantiles of $Z$ can also be computed empirically, simulating the distribution of the linear combination of two independent random variables distributed as a standard Gaussian and a standard t-Student with $\nu_t$ degrees of freedom. In particular, simulations of $Z$ can be found following the steps:

\begin{enumerate}
\item Simulate two independent normal random variables $X$ and $N_Y$, with mean $0$ and variance $1$;
\item Turn $N_Y$ into a uniform distribution $U_Y=\Phi(N_Y)$;
\item Recover the t-Student random variable via $Y=F_t^{-1}(U_Y;\nu_t)$ where $F_t(\cdot;\nu_t)$ is the cumulative density function of a t-Student with $\nu_t$ degrees of freedom;
\item Put $Z=\frac{q_t\sqrt{1-\rho^2_t}X + (c_t+q_t\rho_t)Y}{\sqrt{c_t^2+q_t^2+2\rho_tc_tq_t}}$.
\end{enumerate}

\end{remark}

	We now explain the rationale of \cref{eqModelFreeVaRTstudent}. Doing a
first order approximation of increments of the option portfolio
\(\Pi_t(S_t,\sigma_t)\), taking into consideration the dependence of
every options' implied volatility to the log-forward moneyness and so to
the underlier, we find that the form of the \(\text{P\&L}\)s is
\begin{equation}\label{eqPnLTStudentModelFree}
\begin{aligned}
\Pi_{t+h}(S_{t+h},\sigma_{t+h})-\Pi_t(S_t,\sigma_t) &\approx \beta_t\bigl(S_t\sum_i\pi_i\partial_S\text{BS}_t^i - \sum_i\pi_i\partial_k\sigma_t^i\partial_\sigma\text{BS}_t^i\bigr)\,T_{t+h} + \sum_i\pi_i\partial_\sigma\text{BS}_t^i\Delta\sigma^i_t\\
&= c_tT_{t+h} + \sum_i\pi_i\mathcal V_t^i\Delta\sigma^i_t,
\end{aligned}
\end{equation} where we used the same notations as in
\cref{a-short-term-model-free-formula}. Here, we do not know the
distribution of the increments of the implied volatilities, so that we
cannot automatically infer the distribution of the \(\text{P\&L}\)s.
However, we can still compute VaRs using the relation in
\cref{eqIntPPnLCondS}.

Firstly, we can write \(T_{t+h} = \sqrt h \tilde Y\) where \(\tilde Y\)
is a standard t-Student with \(\nu_t\) degrees of freedom, and
\[\Delta\sigma_t(k(S_t),\tau) = \mu_t(k(S_t),\tau)h + \zeta_t(k(S_t),\tau)\sqrt h \tilde X\]
for a certain random variable \(\tilde X\) with mean \(0\) and variance
\(1\). Then, given \cref{eqPnLTStudentModelFree}, the distribution of
\(\frac{\text{P\&L}}{\sqrt h}\) tends to the distribution of
\(c_t\tilde Y + \sum_i\pi_i\mathcal V_t^i\zeta_t^i \tilde X\), which
does not depend on \(h\). In particular since \begin{align*}
1-\theta &= P\bigl(\text{P\&L}\leq v(\theta,h)\bigr)\\
&= P\biggl(\frac{\text{P\&L}}{\sqrt h}\leq \frac{v(\theta,h)}{\sqrt h}\biggr),
\end{align*} and the limiting random variable has a strictly positive
density, then the function \(v(\theta,h)\) is asymptotic with
\(\sqrt h\), i.e.~\(v(\theta,h) = u(\theta)\sqrt{h} + o(\sqrt h)\). This
is consistent with \cref{eqModelFreeVaR}, where the \(h\)-days VaR is
proportional to the square-root of \(h\).

	Secondly, the distribution of the \(\text{P\&L}\)s conditional to
\(S_{t+h}\) is Gaussian and in particular
\[P\bigl(\text{P\&L}\leq v(\theta,h)|S_{t+h}=s\bigr) = \Phi\biggl(\frac{v(\theta,h)-c_tt(s)-\sum_i\pi_i\mathcal V_t^im_t^i(s)}{q_t\sqrt{1-\rho^2_t}\sqrt{h}}\biggr)\]
where \(t(s) = \frac{s-S_t(1+\alpha_th)}{\beta_tS_t}\). Then, removing
the conditionality to the probability of the \(\text{P\&L}\)s, it holds
\begin{align*}
P\bigl(\text{P\&L}\leq v(\theta,h)\bigr) &= \int_{-\infty}^\infty p_{S_{t+h}}(s)\Phi\biggl(\frac{v(\theta,h)-c_tt(s)-\sum_i\pi_i\mathcal V_t^im_t^i(s)}{q_t\sqrt{1-\rho^2_t}\sqrt{h}}\biggr)\,ds\\
&= \int_{-\infty}^\infty p_{T}(y)\Phi\biggl(\frac{v(\theta,h)-c_ty\sqrt{h}-\sum_i\pi_i\mathcal V_t^im_t^i\bigl(S_t(1+y\beta_t\sqrt{h}+\alpha_th)\bigr)}{q_t\sqrt{1-\rho^2_t}\sqrt{h}}\biggr)\,dy
\end{align*} where \(p_{S_{t+h}}\) is as in \cref{eqPdfSTstudent},
\(p_{T}\) is the probability density function of a standard t-Student
with \(\nu\) degrees of freedom, and we used the transformation
\(y=\frac{t(s)}{\sqrt{h}}\). In this way, for the Lebesgue's dominated
convergence theorem and using the fact that
\(v(\theta,h) = u(\theta)\sqrt{h} + o(\sqrt h)\), the right hand side of
the previous relation goes to
\[\int_{-\infty}^\infty p_{T}(y)\Phi\biggl(\frac{u(\theta)-(c_t+q_t\rho_t)y}{q_t\sqrt{1-\rho^2_t}}\biggr)\,dy\]
for \(h\) going to \(0\). Consider two independent random variables
\(X\) and \(Y\) with \(X\) a standard Gaussian and \(Y\) a standard
t-Student with \(\nu_t\) degrees of freedom. We can write the latter
expression as
\[E\biggl[P\biggl(X\leq\frac{u(\theta)-(c_t+q_t\rho_t)Y}{q_t\sqrt{1-\rho^2_t}}\biggr)\biggr]=P\biggl(X\leq\frac{u(\theta)-(c_t+q_t\rho_t)Y}{q_t\sqrt{1-\rho^2_t}}\biggr).\]
Defining the random variable \(Z\) as in \cref{eqZTStudent}, we shall
look at the value of \(u(\theta)\) such that
\[1-\theta = P\biggl(Z\leq\frac{u(\theta)}{\sqrt{c_t^2+q_t^2+2\rho_tc_tq_t}}\biggr).\]

	All in all, the short-term model-free VaR formula in the t-Student case
becomes \cref{eqModelFreeVaRTstudent} ignoring terms in \(o(\sqrt h)\).

	\begin{remark}

In both \cref{eqModelFreeVaR} and \cref{eqModelFreeVaRTstudent} the vol-of-vol parameter depends on the strike and maturity of the option, while the correlation does not. This is due to the underlying hypothesis that the whole implied volatility surface is driven by one single Brownian motion, even though the magnitude of movements for each surface point depends on the point itself. The short-term model-free formulas can be generalized to the case where there is more than one Brownian motion driving the implied volatility surface (typically the target dimension is of $2$ or $3$).

\end{remark}

	\hypertarget{properties-and-limitations}{%
\subsection{Properties and
limitations}\label{properties-and-limitations}}

	\hypertarget{local-quantities-and-extreme-risk-concrete-practical-implementation}{%
\subsubsection{Local quantities and extreme risk: concrete practical
implementation}\label{local-quantities-and-extreme-risk-concrete-practical-implementation}}

	Observe that all the above VaR estimations (the Black-Scholes formula,
the Stochastic volatility formula, and the short-term model-free
formula) are defined with local quantities: deltas, vegas, instantaneous
volatility and correlation coefficients. The initial margin however
incorporates a tail risk which looks at future moves in prices that
typically correspond to large moves. Even if there is an apparent
paradox here, the explanation is clear: those formulas are asymptotic
formulas when the time step \(h\) goes to zero, and for sufficient small
\(h\) even the tail risk will be driven by the local quantities, in so
far as we deal we diffusion models.

The whole question therefore is how those asymptotic formulas will
behave in practice. Obviously, the smaller the MPOR, or the less
volatile the market, the better. A careful backtesting will be the clue
here: it will allow to diagnose whether the coverage and procyclicality
behavior of the formula are satisfactory.

In this regard, and especially from a regulatory perspective, one should
keep in mind that the final IM formula will contain other components
besides this core one, like a weighted Stress Historical VaR and the
Short Option Minimum quantity described in
\cref{the-mechanism-of-initial-margin-for-options}. In general the
former component will be obtained by computing price returns along
stress historical scenarios with full re-evaluation (meaning, using the
Black-Scholes formula for options with the shocked underlier and implied
volatility) instead of the local first order Greeks. Therefore the risk
of missing a convexity behavior should be largely mitigated, if not
fully eliminated. Regarding the SOM, consider a portfolio of short deep
OTM options. Today, this portfolio has negligible delta and vega
quantities, so that the VaR estimation is around \(0\), even though
there actually is a tail risk. This hidden risk is far to be local, but
it still should be taken into consideration in the initial margin
calculation. This is the rationale of the SOM, which is already
implemented by CCPs and should cover the risk of those short-term
portfolios, as discussed in
\cref{the-mechanism-of-initial-margin-for-options}.

	\hypertarget{symmetry-with-respect-to-the-portfolio}{%
\subsubsection{Symmetry with respect to the
portfolio}\label{symmetry-with-respect-to-the-portfolio}}

	It is easy to see that all the new VaR formulas in this article are
symmetrical with respect to the portfolio, i.e.~being short or long on
the same portfolio would produce the same VaR exposure. This could seem
weird, especially when we suppose a log-normal distribution of the spot,
which is not symmetric. The symmetricity appears when we take the limit
for \(h\) going to \(0\). Indeed, the terms multiplying \(\sqrt{h}\) are
symmetrical in the portfolio position while the ones that should break
the symmetricity multiply higher orders of \(h\), so that they are
canceled out in the limit.

However, this symmetry is not an issue when computing margins: as seen
in \cref{the-mechanism-of-initial-margin-for-options}, the final total
risk requirement charged by the CCP is composed of the margin computed
on P\&Ls (refined by the add-ons and the SOM) minus the NOV component.
In this way, neglecting the add-ons and the SOM, a long portfolio
\(\Pi>0\) with initial margin \(\text{IM}\) has a total risk requirement
equal to \(\text{IM}-\Pi\); while the same portfolio but on a short
position \(-\Pi<0\) implies a total risk requirement of
\(\text{IM}+\Pi\).

	\hypertarget{comparison-with-fhs}{%
\subsubsection{Comparison with FHS}\label{comparison-with-fhs}}

	In \cref{limitations-of-the-fhs}, we have seen that among its drawbacks,
the FHS model is limited by the number of scenarios that it can
generate, depending on the available historical data. On the other hand,
the short-term model-free formula in
\cref{a-short-term-model-free-formula} does not need to compute
simulated scenarios and eventually requires historical data only for the
calibration of volatility parameters.

	Secondly, while the FHS does not capture the joint dynamics of risk
factors in complex products, the short-term model-free formula in
\cref{a-short-term-model-free-formula} considers both the singular
margin impact of each risk factor and the joint margin impact affected
by the correlation of risk factors. Furthermore, the short-term
model-free formula for options is more natural than the FHS methodology,
whose application to IV surface points is more subtle.

A third limitation is the difficulty of FHS to generate arbitrage-free
scenarios, which is not an issue for the short-term model-free formula
in \cref{a-short-term-model-free-formula} since it does not require the
generation of scenarios and does not face the arbitrage issue.

	Finally, regarding the procyclicality of the VaR estimation, we show in
numerical experiments in
\cref{coverage-performances-of-the-short-term-model-free-var} that the
short-term model-free VaR is less procyclical then the FHS VaR for the
tested portfolios.

	We turn now to the exact computation of the VaR in the neural-SDE model.

	\hypertarget{quasi-explicit-formula-for-the-var-in-the-neural-sde-model}{%
\section{Quasi-explicit formula for the VaR in the neural-SDE
model}\label{quasi-explicit-formula-for-the-var-in-the-neural-sde-model}}

	In this section we investigate the neural-SDE model described in
\cref{the-neural-sde-model} and the special specification of its
parameters with the aim of applying it to an IM computation. We are not
interested in the calibration of arbitrage-free call prices surfaces via
neural networks but to the affine factor model for normalized option
prices itself, so that parameters can be calibrated with any algorithm
of choice, which is not necessarily a neural network.

The model is particularly simple and it turns out to have a
quasi-explicit formula for the VaR of option portfolios, as we show in
\cref{quasi-explicit-formula-for-the-var} below. In practice, this could
enable rapid computations for the IM in such models, which may prove to
be highly relevant when properly calibrated.

Moreover, the VaR can be approximated by a closed formula which is
proportional to the square-root of the MPOR (see
\cref{closed-formula-for-the-short-term-var}). This approximated formula
coincides with the VaR formula in the Stochastic Volatility model of
\cref{the-black-scholes-case-and-the-stochastic-volatility-case}.

	We reemphasize the fact that while the model can be calibrated also in
different ways as the ones described in \cite{cohen2021arbitrage}, the
results in this chapter are still valid and independent from the
calibration setup.

We use the same notations as in
\cref{a-simple-short-term-model-free-formula}. Furthermore, in the whole
section, the notation \(\lVert \cdot\rVert_2\) indicates the Euclidean
\(2\)-norm,
i.e.~\(\lVert (a_1,\dots,a_d)^T\rVert_2^2 = \sum_{i=1}^da_i^2\).

	\hypertarget{quasi-explicit-formula-for-the-var}{%
\subsection{Quasi-explicit formula for the
VaR}\label{quasi-explicit-formula-for-the-var}}

	Consider a portfolio of Vanilla calls with price at time \(t\) given by
\begin{equation}\label{eqPi}
\begin{aligned}
\Pi_t(S_t,\xi_t) &= \sum_i\pi_iC\Bigl(T_i-t,\log\frac{K_i}{f(T_i-t)S_t};S_t,\xi_t\Bigr)\\
&= S_t\sum_i\pi_i\text{DF}_t(T_i-t)f(T_i-t) c\Bigl(T_i-t,\log\frac{K_i}{f(T_i-t)S_t};\xi_t\Bigr)
\end{aligned}
\end{equation} where
\[c\Bigl(T_i-t,\log\frac{K_i}{f(T_i-t)S_t};\xi_t\Bigr)=G_0\Bigl(T_i-t,\log\frac{K_i}{f(T_i-t)S_t}\Bigr)+G\Bigl(T_i-t,\log\frac{K_i}{f(T_i-t)S_t}\Bigr)\cdot\xi_{t}.\]
From now on, we work at time \(t\), so that quantities \(S_t\) and
\(\xi_t\) are known. The
\(\text{P\&L}:=\Pi_{t+h}(S_{t+h},\xi_{t+h})-\Pi_t(S_t,\xi_t)\) of the
portfolio reads then \begin{equation}\label{eqPnL}
\text{P\&L} = A(h,S_{t+h}) + B(h,S_{t+h})\cdot\bigl(\xi_{t+h}-\xi_t\bigr)
\end{equation} where \begin{equation}\label{eqAB}
\begin{aligned}
A(h,s) =&\ s\sum_i\pi_i\text{DF}_t(T_i-(t+h))f(T_i-(t+h))\biggl(G_0\Bigl(T_i-(t+h),\log\frac{K_i}{f(T_i-(t+h))s}\Bigr) +\\
&+ G\Bigl(T_i-(t+h),\log\frac{K_i}{f(T_i-(t+h))s}\Bigr)\cdot\xi_t\biggr) - \Pi_t(S_t,\xi_t),\\
B(h,s) =&\ s\sum_i\pi_i\text{DF}_t(T_i-(t+h))f(T_i-(t+h))G\Bigl(T_i-(t+h),\log\frac{K_i}{f(T_i-(t+h))s}\Bigr).
\end{aligned}
\end{equation}

An important consequence to the representation in \cref{eqPnL} is the
linearity of the P\&Ls in \(\xi_{t+h}-\xi_t\). In terms of VaR
calculations, this means that the VaR for the P\&Ls' distribution
conditional to \(S_{t+h}\) is linear with respect to the VaR for the
\(\xi_{t+h}\) distribution.

	\hypertarget{hypothesis-on-the-joint-increments}{%
\subsubsection{Hypothesis on the joint
increments}\label{hypothesis-on-the-joint-increments}}

	Since from a practical perspective market data is always related to a
discrete time grid, from now on, for risk calculations we consider
processes defined via their Euler scheme as in
\cref{empirical-var-in-the-neural-sde-model}, i.e.
\begin{equation}\label{eqEulerScheme}
\begin{aligned}
S_{t+h} &= S_t\exp\biggl(\Bigl(\alpha_t-\frac{\beta_t^2}2\Bigr) h + \beta_t(W_{0,t+h}-W_{0,t})\biggr),\\
\xi_{t+h} &= \xi_t + \mu_t h + \sigma_t\cdot(W_{t+h}-W_t)
\end{aligned}
\end{equation} where \(W_{0,t+h}-W_{0,t}\in\mathbb R\) and
\(W_{t+h}-W_t\in\mathbb{R}^d\) are Gaussian random variables with
combined law \(N(\mathbf 0, hP_t)\) and \(P_t\) is the correlation
matrix \begin{equation*}
P_t=\begin{pmatrix}
    1 & \begin{matrix} P_{S,\xi,t}^T \end{matrix}\\
    \begin{matrix} P_{S,\xi,t} \end{matrix} & P_{\xi,t}
\end{pmatrix}.
\end{equation*}

We work at time \(t\), so that quantities \(S_t\), \(\xi_t\),
\(\alpha_t\), \(\beta_t\), \(\mu_t\) and \(\sigma_t\) are known. We will
not need to observe \(W_{0,t}\) and \(W_{t}\).

	\begin{remark}\label{remarkSXiSDE}

The Euler schemes with time step $h$ for the processes $S_t$ and $\xi_t$ defined via the SDE \cref{eqSXi} are a particular case of \cref{eqEulerScheme}, therefore the results of this section hold also in this case.

\end{remark}

	To develop \cref{eqIntPPnLCondS} we need a partial result regarding the
distribution of the increments of \(\xi_t\) conditional to \(S_{t+h}\).

	\begin{lemma}\label{lemmaDistribXiCondS}

For processes in \cref{eqEulerScheme}, the distribution of $\xi_{t+h}-\xi_t$ conditional to $S_{t+h}=s$ is a Gaussian $N(m_t(s),V_t)$ where
\begin{align*}
m_t(s) &= \mu_t h + \frac{1}{\beta_t}\biggl(\log\frac{s}{S_t}-\Bigl(\alpha_t-\frac{\beta_t^2}2\Bigr)h\biggr)\sigma_t\cdot P_{S,\xi,t},\\
V_t &= h(\sigma_t\cdot b_t)\cdot(\sigma_t\cdot b_t)^T,
\end{align*}
and $b_t\in\mathbb{R}^{d\times d}$ is a matrix such that $b_t\cdot b_t^T = P_{\xi,t}-P_{S,\xi,t}\cdot P_{S,\xi,t}^T$.

In the case of independent processes, $m_t(s)=\mu_th$ and $V_t=h\sigma_t\cdot\sigma_t^T$.

\end{lemma}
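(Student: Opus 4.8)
The plan is to reduce the statement to the standard Gaussian-conditioning formula for one block of a jointly Gaussian vector given another. First I would observe that, since $\beta_t>0$, the map $x\mapsto S_t\exp\bigl((\alpha_t-\frac{\beta_t^2}{2})h+\beta_t x\bigr)$ is a strictly increasing bijection of $\mathbb R$ onto $(0,\infty)$; hence conditioning on $\{S_{t+h}=s\}$ is equivalent to conditioning on the event
\[
W_{0,t+h}-W_{0,t}=g_t(s):=\frac1{\beta_t}\Bigl(\log\frac{s}{S_t}-\Bigl(\alpha_t-\frac{\beta_t^2}{2}\Bigr)h\Bigr).
\]
Thus the whole question is the law of $W_{t+h}-W_t$ given the value of $W_{0,t+h}-W_{0,t}$, pushed forward through the affine map $\xi_{t+h}-\xi_t=\mu_t h+\sigma_t\cdot(W_{t+h}-W_t)$ of \cref{eqEulerScheme}.

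Then I would apply Gaussian conditioning to $(W_{0,t+h}-W_{0,t},\,W_{t+h}-W_t)\sim N(\mathbf 0,hP_t)$ with the block form of $P_t$. The conditional law of $W_{t+h}-W_t$ given $W_{0,t+h}-W_{0,t}=g$ is Gaussian with mean $(hP_{S,\xi,t})\,h^{-1}g=g\,P_{S,\xi,t}$ and covariance $hP_{\xi,t}-(hP_{S,\xi,t})\,h^{-1}(hP_{S,\xi,t}^T)=h\bigl(P_{\xi,t}-P_{S,\xi,t}P_{S,\xi,t}^T\bigr)$. Substituting $g=g_t(s)$ and composing with $x\mapsto\mu_t h+\sigma_t\cdot x$ yields a Gaussian with mean $\mu_t h+g_t(s)\,\sigma_t\cdot P_{S,\xi,t}=m_t(s)$ and covariance $h\,\sigma_t\bigl(P_{\xi,t}-P_{S,\xi,t}P_{S,\xi,t}^T\bigr)\sigma_t^T$. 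Picking any $b_t$ with $b_t b_t^T=P_{\xi,t}-P_{S,\xi,t}P_{S,\xi,t}^T$ rewrites the covariance as $h(\sigma_t\cdot b_t)\cdot(\sigma_t\cdot b_t)^T=V_t$; such a $b_t$ exists because $P_{\xi,t}-P_{S,\xi,t}P_{S,\xi,t}^T$ is the Schur complement of the $(1,1)$ entry of the correlation matrix $P_t$ and is therefore positive semidefinite. The independent case is the specialization $P_{S,\xi,t}=\mathbf 0$, $P_{\xi,t}=I_d$, which gives $m_t(s)=\mu_t h$ and $V_t=h\,\sigma_t\cdot\sigma_t^T$.

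This computation is essentially routine, so I do not anticipate a genuine obstacle; the two points that warrant a line of care are (i) the reduction of conditioning on $S_{t+h}$ to conditioning on the Brownian increment $W_{0,t+h}-W_{0,t}$, which is exactly where the non-degeneracy $\beta_t>0$ and strict monotonicity of the lognormal map are used, and (ii) the fact that the stated $V_t$ does not depend on the non-unique choice of $b_t$, which holds because the intrinsic object produced by the conditioning is $h\,\sigma_t\bigl(P_{\xi,t}-P_{S,\xi,t}P_{S,\xi,t}^T\bigr)\sigma_t^T$ and the passage to $b_t$ is only a matter of presentation. By \cref{remarkSXiSDE} the same conclusion then transfers verbatim to the Euler scheme of the SDE \cref{eqSXi}.
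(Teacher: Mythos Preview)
Your proposal is correct and follows essentially the same route as the paper: reduce conditioning on $S_{t+h}$ to conditioning on the Brownian increment $W_{0,t+h}-W_{0,t}$ via the lognormal bijection, then use the joint Gaussianity of the increments and push forward through the affine map for $\xi$. The only cosmetic difference is that the paper derives the conditional law by explicitly writing the residual $Z=\Delta W_t-\Delta W_{0,t}P_{S,\xi,t}$ and checking it is uncorrelated with (hence independent of) $\Delta W_{0,t}$, whereas you invoke the block Gaussian-conditioning formula directly; these are the same computation.
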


	\begin{proof}

Let us consider the Brownian increments $\Delta W_{0,t} = W_{0,t+h}-W_{0,t}$ and $\Delta W_t = W_{t+h}-W_{t}$, which have Gaussian joint distribution $N(\mathbf{0},hP_t)$. The Gaussian random variable $Z = \Delta W_t - \Delta W_{0,t}P_{S,\xi}$ has null mean and variance equal to $h(P_{\xi,t}-P_{S,\xi,t}\cdot P_{S,\xi,t}^T)$. Since covariance matrices are symmetric and positive semi-definite, there exists a matrix $b_t\in\mathbb{R}^{d\times d}$ such that $b_t\cdot b_t^T = P_{\xi,t}-P_{S,\xi,t}\cdot P_{S,\xi,t}^T$. Also, $Z$ and $\Delta W_{0,t}$ are independent since uncorrelated and jointly Gaussian, and it follows that the distribution of $\Delta W_t$ conditional to $\Delta W_{0,t}=w_0$ is a Gaussian with mean $w_0P_{S,\xi,t}$ and covariance matrix $hb_t\cdot b_t^T$.

From \cref{eqEulerScheme}, it is immediate to recover the distribution of $\xi_{t+h}-\xi_t$ conditional to $\Delta W_{0,t}=w_0$. For the conditionality with respect to $S_{t+h}=s$, it is enough to substitute $w_0$ with $\frac{1}{\beta_t}\bigl(\log\frac{s}{S_t}-\bigl(\alpha_t-\frac{\beta_t^2}2\bigr)h\bigr)$.

If processes are independent, $P_{S,\xi,t}=\mathbf{0}$, $P_{\xi,t}=I_d$, and the conclusion follows.

\end{proof}

	As an immediate consequence to \cref{lemmaDistribXiCondS}, we can write
the increments of \(\xi_t\) conditional to \(S_{t+h}\) as
\begin{equation}\label{eqXiCondS}
\xi_{t+h}-\xi_t|(S_{t+h}=s) = m_t(s) + \sqrt{h}\sigma_t\cdot b_t\cdot X
\end{equation} where \(X\sim N(0,I_d)\) is a Gaussian random variable
independent to \(S_{t+h}\).

Then \begin{align*}
\text{P\&L}|(S_{t+h}=s) &= A(h,s) + B(h,s)\cdot\bigl(m_t(s) + \sqrt{h}\sigma_t\cdot b_t\cdot X\bigr)\\
&= \hat A(h,s) + \hat B(h,s)\cdot X
\end{align*} where \begin{equation}\label{defHatAHatB}
\begin{aligned}
\hat A(h,s) &:= A(h,s) + B(h,s)\cdot m_t(s) \in\mathbb R,\\
\hat B(h,s) &:= B(h,s)\cdot\sqrt{h}\sigma_t\cdot b_t \in\mathbb R^{1\times d}.
\end{aligned}
\end{equation}

	In particular, conditional to \(S_{t+h}=s\), the P\&L is a sum of
jointly Gaussian variables, so it is also a Gaussian variable with law
\(N(\hat A(h,s),\lVert\hat B(h,s)\rVert_2^2)\). Then the quantity
\(P(\text{P\&L}\leq v(\theta,h)|S_{t+h}=s)\) is the cumulative function
of a Gaussian variable, and in particular it is equal to
\begin{equation*}
P(\text{P\&L}\leq v(\theta,h)|S_{t+h}=s) = \Phi\biggl(\frac{v(\theta,h)-\hat A(h,s)}{\lVert\hat B(h,s)\rVert_2}\biggr).
\end{equation*}

Reconsidering \cref{eqIntPPnLCondS}, the VaR at risk level \(\theta\)
for the P\&Ls can be computed as specified in the following lemma.

	\begin{proposition}\label{lemmaXiGauss}

Under the model of \cref{eqCReisModel,eqEulerScheme}, the $h$-days VaR at confidence level $\theta$ of the portfolio \cref{eqPi} is the value of $v(\theta,h)$ which solves
\begin{equation}\label{eqIntfF}
1-\theta = \int_0^\infty \Phi\biggl(\frac{v(\theta,h)-\hat A(h,s)}{\lVert\hat B(h,s)\rVert_2}\biggr)\,dF_{S_{t+h}}(s)s.
\end{equation}
where \cref{eqAB,defHatAHatB} define $\hat A(h,s)$ and $\hat B(h,s)$.

\end{proposition}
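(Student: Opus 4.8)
The plan is to assemble the pieces already in place above; the statement is essentially a consolidation of the preceding derivation. By definition the $h$-days VaR at level $\theta$ is the value $v(\theta,h)$ for which $P(\text{P\&L}\leq v(\theta,h)) = 1-\theta$, so it suffices to rewrite the left-hand side. First I would apply the conditioning identity \cref{eqIntPPnLCondS}, expressing $P(\text{P\&L}\leq v(\theta,h))$ as $\int_0^\infty P\bigl(\text{P\&L}\leq v(\theta,h)\mid S_{t+h}=s\bigr)\,dF_{S_{t+h}}(s)$.

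Next I would substitute the conditional law of the P\&L. Combining \cref{eqPnL} with the representation \cref{eqXiCondS} of $\xi_{t+h}-\xi_t$ conditional on $S_{t+h}=s$ (which follows from \cref{lemmaDistribXiCondS}), one obtains $\text{P\&L}\mid(S_{t+h}=s) = \hat A(h,s) + \hat B(h,s)\cdot X$ with $X\sim N(0,I_d)$ and $\hat A,\hat B$ as in \cref{defHatAHatB}. As an affine image of a Gaussian vector this is Gaussian, $N\bigl(\hat A(h,s),\lVert\hat B(h,s)\rVert_2^2\bigr)$, so its distribution function at $v(\theta,h)$ equals $\Phi\bigl((v(\theta,h)-\hat A(h,s))/\lVert\hat B(h,s)\rVert_2\bigr)$; inserting this into the integral produces precisely \cref{eqIntfF}.

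There is no real difficulty here, only two points deserving a word. The first is the degenerate case $\lVert\hat B(h,s)\rVert_2=0$, where the conditional P\&L is deterministic and the Gaussian cdf must be interpreted as the limiting indicator $\mathbbm{1}_{\hat A(h,s)\leq v(\theta,h)}$; one either notes that this occurs on a set of $F_{S_{t+h}}$-measure zero for the parameters considered, or simply adopts this convention. The second is well-posedness: the right-hand side of \cref{eqIntfF}, viewed as a function of $v$, is continuous, nondecreasing and ranges from $0$ to $1$, so a solution exists, and it is unique as soon as $\hat B(h,\cdot)$ does not vanish $F_{S_{t+h}}$-almost everywhere, which is the regime of interest. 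Both remarks are minor and could be folded into a single clause of the proof.
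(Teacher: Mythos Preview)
Your proposal is correct and follows exactly the paper's approach: the proposition is indeed just a consolidation of the derivation immediately preceding it, namely conditioning via \cref{eqIntPPnLCondS}, inserting the Gaussian conditional law of the P\&L obtained from \cref{eqPnL} and \cref{eqXiCondS}, and reading off the cdf. Your additional remarks on the degenerate case $\lVert\hat B(h,s)\rVert_2=0$ and on monotonicity/well-posedness of the defining equation are minor refinements that the paper leaves implicit.
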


	Note that \cref{lemmaXiGauss} gives a semi-closed formula for the VaR in
the neural-SDE model, with no need of further hypothesis. As a
consequence, we can compute efficiently the VaR in this model without
using any approximation.

	We shall notice that since losses cannot be larger than today's
position, the result \(v(\theta,h)\) should always be higher than minus
the current value of the portfolio,
i.e.~\(v(\theta,h)\geq-\Pi_t(S_t,\xi_t)\). This condition holds true if
and only if the P\&Ls' distribution is null below \(-\Pi_t(S_t,\xi_t)\),
or equivalently if and only if the distribution of future prices is null
below \(0\). In particular, it must hold \[G\cdot\xi_{t+h}>-G_0\] for
any \(\xi_{t+h}\). This condition does not seem to be guaranteed a
priori. Indeed, it depends on how the parameters of the distribution of
the \(\xi_t\) are calibrated. However, if the \(\xi_t\) are calibrated
such that call prices always satisfy no arbitrage conditions, then in
particular prices will always be positive.

	\hypertarget{calls-and-puts-portfolio}{%
\subsubsection{Calls and Puts
portfolio}\label{calls-and-puts-portfolio}}

	In the case of portfolios with both call and put options, it is
sufficient to re-write put options using the put-call-parity
\[P\Bigl(T-t,\log\frac{K}{F_t(T-t)}\Bigr) = C\Bigl(T-t,\log\frac{K}{F_t(T-t)}\Bigr) - \text{DF}_t(T-t)\bigl(F_t(T-t)-K\bigr).\]
In this way, the relation in \cref{eqPnL} still holds redefining
quantities \(A\) and \(B\) with elementary steps. In particular for
every put option position
\(\pi P\bigl(T-t,\log\frac{K}{F_t(T-t)}\bigr)\) in the portfolio,
\(A(h,s)\) adds the term \begin{align*}
&s\pi\text{DF}_t(T-(t+h))f(T-(t+h))\biggl(G_0\Bigl(T-(t+h),\log\frac{K}{f(T-(t+h))s}\Bigr) +\\
&+ G\Bigl(T-(t+h),\log\frac{K}{f(T-(t+h))s}\Bigr)\cdot\xi_t - 1\biggr) + \pi\text{DF}_t(T-(t+h))K,
\end{align*} with \(\Pi_t(S_t,\xi_t)\) updating its value with the added
puts, while \(B(h,s)\) adds
\[s\pi\text{DF}_t(T-(t+h))f(T-(t+h))G\Bigl(T-(t+h),\log\frac{K}{f(T-(t+h))s}\Bigr).\]

	\hypertarget{closed-formula-for-the-short-term-var}{%
\subsection{Closed formula for the short term
VaR}\label{closed-formula-for-the-short-term-var}}

	In this section we start by proving that the VaR in the neural-SDE model
for option prices is of the form \(u(\theta)\sqrt{h}\) asymptotically
with \(h\). This formulation reflects empirical results and standard
models adopted in industry and it is consistent with
\cref{eqModelFreeVaR}. Then, we state the main result of this section
computing the explicit form of the function \(u(\theta)\).

	Firstly, we look at the form of the function \(v(\theta,h)\) when \(h\)
is small.

\begin{lemma}\label{lemmaVaRSqrtH}

Under the model of \cref{eqCReisModel,eqEulerScheme}, the $h$-days VaR at confidence level $\theta$ of the portfolio \cref{eqPi} is asymptotic to $\sqrt{h}$ for $h$ going to $0$:
$$\text{VaR}_{\theta,t}(h) = u(\theta)\sqrt{h} + o\bigl(\sqrt h\bigr)$$
for a certain function $u(\theta)$ not depending on $h$.

\end{lemma}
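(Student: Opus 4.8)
The plan is to mimic the scaling argument already used for the t-Student short-term formula in Section \ref{a-short-term-model-free-formula}: rescale the P\&L by $\sqrt h$, show the rescaled random variable converges in distribution to a non-degenerate limit with a strictly positive density, and conclude that the $(1-\theta)$-quantile scales like $\sqrt h$. First I would take the representation \cref{eqPnL}, $\text{P\&L} = A(h,S_{t+h}) + B(h,S_{t+h})\cdot(\xi_{t+h}-\xi_t)$, and expand each ingredient to first order in $h$. Writing $S_{t+h} = S_t\exp\bigl((\alpha_t-\beta_t^2/2)h + \beta_t\sqrt h\,X_0\bigr)$ with $X_0\sim N(0,1)$, we have $S_{t+h}-S_t = \beta_t S_t\sqrt h\,X_0 + O(h)$, and similarly $\xi_{t+h}-\xi_t = \sigma_t\cdot(W_{t+h}-W_t) + O(h) = \sqrt h\,\sigma_t\cdot G + O(h)$ where $G\sim N(\mathbf 0,P_{\xi,t})$ is jointly Gaussian with $X_0$ (correlation vector $P_{S,\xi,t}$). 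Since $A(0,S_t) = 0$ (at $h=0$, $s=S_t$ the bracket reproduces $\Pi_t(S_t,\xi_t)$, which is subtracted) and $B(0,S_t) = \nabla_\xi\Pi_t(S_t,\xi_t)$ up to the discounting/forward factors, a Taylor expansion gives
\[
\text{P\&L} = \sqrt h\Bigl(\partial_S\Pi_t\cdot\beta_t S_t\,X_0 + \nabla_\xi\Pi_t\cdot(\sigma_t\cdot G)\Bigr) + o(\sqrt h),
\]
where $\partial_S A(h,s)$ at $(0,S_t)$ contributes the full $S$-derivative of $\Pi_t$ (including the moneyness dependence inside $c$). Hence $\text{P\&L}/\sqrt h$ converges in law to a fixed Gaussian $\mathcal N$ whose variance $u_0^2 := \Phi^{-1}(\cdot)$-free quadratic form does not depend on $h$; call its law $\mu$.

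Next I would turn this distributional limit into a statement about quantiles. Define $u(\theta)$ as the $(1-\theta)$-quantile of $\mu$, i.e. $P(\mathcal N\le u(\theta)) = 1-\theta$; since $\mu$ is Gaussian (non-degenerate as long as the first-order coefficient vector is nonzero) its CDF is continuous and strictly increasing, so $u(\theta)$ is well-defined and unique. The defining relation for the VaR is $1-\theta = P(\text{P\&L}\le \text{VaR}_{\theta,t}(h)) = P(\text{P\&L}/\sqrt h \le \text{VaR}_{\theta,t}(h)/\sqrt h)$. Because $\text{P\&L}/\sqrt h \Rightarrow \mathcal N$ and the limiting CDF is continuous and strictly increasing, the quantile functions converge: $\text{VaR}_{\theta,t}(h)/\sqrt h \to u(\theta)$, which is exactly the claim $\text{VaR}_{\theta,t}(h) = u(\theta)\sqrt h + o(\sqrt h)$. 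Concretely one argues by contradiction on subsequences: if $\text{VaR}_{\theta,t}(h_n)/\sqrt{h_n}$ stays above $u(\theta)+\varepsilon$, then $P(\text{P\&L}/\sqrt{h_n}\le u(\theta)+\varepsilon) \to F_\mu(u(\theta)+\varepsilon) > 1-\theta$, contradicting that this probability equals $1-\theta$ for $h_n$ small; symmetrically for the lower bound, using $F_\mu(u(\theta)-\varepsilon) < 1-\theta$.

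The main obstacle is making the first-order expansion of $A(h,s)$ and $B(h,s)$ rigorous and uniform enough to justify that the $o(\sqrt h)$ remainder in $\text{P\&L}$ is genuinely $o(\sqrt h)$ in probability (and that the convergence $\text{P\&L}/\sqrt h \Rightarrow \mathcal N$ really holds): this requires smoothness of $G_0,G$ in the time-to-maturity and log-moneyness arguments, and control of the remainder as $S_{t+h}$ ranges over its (heavy-ish, lognormal) support. One clean route is to write $\text{P\&L}/\sqrt h = \Psi_h(X_0, G)$ for an explicit deterministic function $\Psi_h$ and check that $\Psi_h \to \Psi_0$ locally uniformly with $\Psi_0$ the linear map above; continuity of the limit law then follows since $\Psi_0(X_0,G)$ is a nontrivial Gaussian. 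A secondary subtlety is the degenerate case where the leading coefficient vector $\bigl(\beta_t S_t\,\partial_S\Pi_t,\ \sigma_t^T\nabla_\xi\Pi_t\bigr)$ vanishes (e.g. an empty or perfectly hedged portfolio): there $u(\theta)=0$ and one should note the statement still holds trivially, or simply assume this vector is nonzero, consistent with the analogous discussion for \cref{eqModelFreeVaR}.
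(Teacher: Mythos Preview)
Your proposal is correct and follows essentially the same strategy as the paper: rescale the P\&L by $\sqrt h$, identify a non-degenerate Gaussian limit, and deduce the quantile asymptotics. The only cosmetic difference is that the paper computes the limit of $A(h,S_{t+h})/\sqrt h$ via L'H\^opital's rule (differentiating the composite map $h\mapsto A(h,S_{t+h})$ directly), whereas you do a two-variable Taylor expansion of $A(h,s)$ around $(0,S_t)$ combined with $S_{t+h}-S_t = \beta_tS_t\sqrt h\,X_0 + O(h)$; both routes produce the same leading coefficient and the same Gaussian limit $\gamma Y + B(0,S_t)\cdot\sigma_t\cdot X$ in the paper's notation. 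Your framing $\text{P\&L}/\sqrt h = \Psi_h(X_0,G)$ with $\Psi_h\to\Psi_0$ locally uniformly, together with the explicit subsequence argument for quantile convergence, is if anything slightly more careful than the paper's appeal to ``almost surely, hence in law'' followed by continuity of the limiting inverse CDF.
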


We give the proof in \cref{proof-of-lemma}.

	Our next result uses the quantities \begin{equation}\label{eqDefCDRho}
\begin{aligned}
c_t :=&\ S_t\beta_t \sum_i\pi_i\text{DF}_t(T_i-t)f(T_i-t)\times\\
&\times\Bigl[G_0\Bigl(T_i-t,\log\frac{K_i}{f(T_i-t)S_t}\Bigr) + G\Bigl(T_i-t,\log\frac{K_i}{f(T_i-t)S_t}\Bigr)\cdot\xi_t  - \mathbbm{1}_{\text{Puts}}(i) +\\
&- \partial_k\Bigl(G_0\Bigl(T_i-t,\log\frac{K_i}{f(T_i-t)S_t}\Bigr) + G\Bigl(T_i-t,\log\frac{K_i}{f(T_i-t)S_t}\Bigr)\cdot\xi_t\Bigr)\Bigr] +\\
&+B(0,S_t)\cdot\sigma_t\cdot P_{S,\xi,t},\\
q_t :=&\ \bigl\lVert B(0,S_t)\cdot\sigma_t\cdot b_t \bigr\rVert_2,\\
\end{aligned}
\end{equation} where \(\mathbbm{1}_{\text{Puts}}(i)\) is \(1\) if the
index \(i\) refers to a put, otherwise it is null, and
\[B(0,S_t) = S_t\sum_i\pi_i\text{DF}_t(T_i-t)f(T_i-t)G\Bigl(T_i-t,\log\frac{K_i}{f(T_i-t)S_t}\Bigr).\]

\begin{remark}

It is easy to prove that $c_t$ and $q_t$ can be written with the alternative expressions:
\begin{align*}
c_t &= S_t\beta_t\frac{d}{dS_t}\Pi_t(S_t,\xi_t) + \nabla_{\xi_t}\Pi_t(S_t,\xi_t)^T\cdot\sigma_t\cdot P_{S,\xi,t},\\
q_t &= \lVert\nabla_{\xi_t}\Pi_t(S_t,\xi_t)^T\cdot\sigma_t\cdot b_t\rVert_2.\\
\end{align*}

\end{remark}

	We can now state the main result of this section.

\begin{proposition}\label{propFinal}

Under the model of \cref{eqCReisModel,eqEulerScheme}, the $h$-days VaR at confidence level $\theta$ of the portfolio \cref{eqPi} is
\begin{equation}\label{eqVTheta}
\text{VaR}_{\theta,t}(h) = \Phi^{-1}\bigl(1-\theta\bigr)\sqrt{c_t^2+q_t^2}\sqrt{h} + o\bigl(\sqrt{h}\bigr)
\end{equation}
where $c_t$ and $q_t$ are defined in \cref{eqDefCDRho}.

\end{proposition}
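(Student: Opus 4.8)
The plan is to combine the semi-closed characterisation of Proposition~\ref{lemmaXiGauss} with the $\sqrt h$-scaling of Lemma~\ref{lemmaVaRSqrtH}, and to pass to the limit $h\to 0$ inside the integral equation \cref{eqIntfF} — exactly as was done for the t-Student formula in \cref{a-short-term-model-free-formula}. First I would change the integration variable: since under \cref{eqEulerScheme} we have $S_{t+h}=S_t\exp\bigl((\alpha_t-\tfrac{\beta_t^2}{2})h+\beta_t\sqrt h\,x\bigr)$ with $x\sim N(0,1)$, the substitution $s=s(x,h)$ turns $dF_{S_{t+h}}(s)$ into $\varphi(x)\,dx$ on $\mathbb R$, so that \cref{eqIntfF} becomes
\[
1-\theta=\int_{-\infty}^{\infty}\Phi\!\left(\frac{v(\theta,h)-\hat A(h,s(x,h))}{\lVert\hat B(h,s(x,h))\rVert_2}\right)\varphi(x)\,dx .
\]

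Next I would Taylor-expand $\hat A=A+B\cdot m_t$ and $\hat B=B\cdot\sqrt h\,\sigma_t\cdot b_t$ around $(h,s)=(0,S_t)$ in powers of $\sqrt h$, using $s(x,h)-S_t=S_t\beta_t\sqrt h\,x+O(h)$. One checks that $A(0,S_t)=\Pi_t(S_t,\xi_t)-\Pi_t(S_t,\xi_t)=0$ and, from $\partial_s\bigl[s\,c(\tau,\log\tfrac{K}{f(\tau)s};\xi_t)\bigr]=c-\partial_k c$ (with an extra $-1$ for each put, coming from the put-call-parity constant in $A$), that $\partial_s A(0,S_t)=\sum_i\pi_i\mathrm{DF}_t(T_i-t)f(T_i-t)\bigl(c_i-\mathbbm{1}_{\text{Puts}}(i)-\partial_k c_i\bigr)$, where $c_i=G_0+G\cdot\xi_t$ evaluated at $(T_i-t,\log\frac{K_i}{f(T_i-t)S_t})$. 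Moreover, substituting $\log\frac{s}{S_t}$ into $m_t$ gives $m_t(s(x,h))=\mu_t h+\sqrt h\,x\,\sigma_t\cdot P_{S,\xi,t}$, and $B(h,s(x,h))\to B(0,S_t)$. Collecting the $\sqrt h$-coefficients yields $\hat A(h,s(x,h))=c_t\,x\sqrt h+o(\sqrt h)$ and $\lVert\hat B(h,s(x,h))\rVert_2=q_t\sqrt h+o(\sqrt h)$, with $c_t,q_t$ exactly as in \cref{eqDefCDRho}; the equivalence with the gradient form in the remark following \cref{eqDefCDRho} is then immediate from $\tfrac{d}{dS_t}\Pi_t(S_t,\xi_t)=\sum_i\pi_i\mathrm{DF}_t(T_i-t)f(T_i-t)(c_i-\mathbbm{1}_{\text{Puts}}(i)-\partial_k c_i)$ and $\nabla_{\xi_t}\Pi_t(S_t,\xi_t)=B(0,S_t)$.

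Then, assuming the non-degenerate case $q_t>0$, insert $v(\theta,h)=u(\theta)\sqrt h+o(\sqrt h)$ from Lemma~\ref{lemmaVaRSqrtH}: the argument of $\Phi$ converges, pointwise in $x$, to $\tfrac{u(\theta)-c_t x}{q_t}$. Dominated convergence applies at once (the integrand is bounded by $1$ and $\varphi$ is integrable), giving
\[
1-\theta=\int_{-\infty}^{\infty}\Phi\!\left(\frac{u(\theta)-c_t x}{q_t}\right)\varphi(x)\,dx=E\!\left[\Phi\!\left(\tfrac{u(\theta)-c_t X}{q_t}\right)\right],\qquad X\sim N(0,1).
\]
Using the identity $E[\Phi(a+bX)]=\Phi\bigl(a/\sqrt{1+b^2}\bigr)$ — seen by writing the left side as $P(X'-bX\le a)$ with $X'$ an independent standard Gaussian — with $a=u(\theta)/q_t$ and $b=-c_t/q_t$, the right side collapses to $\Phi\bigl(u(\theta)/\sqrt{c_t^2+q_t^2}\bigr)$. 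Hence $u(\theta)=\Phi^{-1}(1-\theta)\sqrt{c_t^2+q_t^2}$, which is \cref{eqVTheta}.

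The main obstacle I anticipate is making the $\sqrt h$-expansions of $\hat A$ and $\hat B$ fully rigorous: one needs $C^1$ regularity of $G_0,G$ in $(\tau,k)$ and of the forward curve $\tau\mapsto f(\tau)$ so that $A(h,s)$, $B(h,s)$ are differentiable at $(0,S_t)$ and the $o(\sqrt h)$ remainders are genuinely negligible after dividing \cref{eqIntfF} through by $\sqrt h$; the tail control in $x$ is not an issue, since pointwise convergence suffices for the dominated convergence step. A secondary point is the degenerate case $q_t=0$, where $\hat B\equiv 0$ and the conditional P\&L is deterministic: the same argument then runs with the indicator $\mathbbm{1}_{\{\hat A\le v\}}$ in place of $\Phi$, and using the continuity of the Gaussian limit law of $c_t x$ one still obtains $u(\theta)=\Phi^{-1}(1-\theta)\lvert c_t\rvert$, consistent with \cref{eqVTheta}.
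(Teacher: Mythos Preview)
Your proposal is correct and follows essentially the same route as the paper: the same change of variable to a standard Gaussian, the same use of Lemma~\ref{lemmaVaRSqrtH} to write $v(\theta,h)=u(\theta)\sqrt h+o(\sqrt h)$, the same dominated-convergence passage to the limit, and the same reduction of the limiting integral to $\Phi\bigl(u(\theta)/\sqrt{c_t^2+q_t^2}\bigr)$. The only cosmetic difference is that you obtain the pointwise limit of the $\Phi$-argument via a first-order Taylor expansion of $\hat A$ and $\hat B$ in $(h,s)$, whereas the paper computes the same limit by L'H\^opital's rule on $\frac{A(h,s(h,y))}{\sqrt h}$; both require exactly the $C^1$ regularity of $G_0,G,f,\mathrm{DF}_t$ that you flag, and your treatment of the degenerate case $q_t=0$ is a welcome addition the paper omits.
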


	The proof is given in \cref{proof-of}.

	\begin{corollary}\label{corolPropFinal}

Under the model of \cref{eqCReisModel,eqEulerScheme} with $d=1$, the $h$-days VaR at confidence level $\theta$ of the portfolio \cref{eqPi} is
\begin{equation*}
\begin{aligned}
\text{VaR}_{\theta,t}(h) =&\ \Phi^{-1}\bigl(1-\theta\bigr)\times\\
&\times \sqrt{\Bigl(S_t\beta_t\frac{d}{dS_t}\Pi_t(S_t,\xi_t)\Bigr)^2 + \Bigl(\sigma_t\frac{d}{d\xi_t}\Pi_t(S_t,\xi_t)\Bigr)^2 + 2P_{S,\xi,t}S_t\beta_t\sigma_t\frac{d}{dS_t}\Pi_t(S_t,\xi_t)\frac{d}{d\xi_t}\Pi_t(S_t,\xi_t)}\sqrt{h} +\\
&+ o\bigl(\sqrt h\bigr).
\end{aligned}
\end{equation*}

\end{corollary}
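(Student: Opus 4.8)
The plan is to obtain the corollary as a direct specialisation of \cref{propFinal} to the scalar case $d=1$, with essentially no new work beyond an elementary algebraic rearrangement. First I would apply \cref{propFinal} to write $\text{VaR}_{\theta,t}(h) = \Phi^{-1}(1-\theta)\sqrt{c_t^2+q_t^2}\,\sqrt h + o(\sqrt h)$, and then replace $c_t$ and $q_t$ by the alternative expressions recorded in the remark just before \cref{propFinal}, namely $c_t = S_t\beta_t\frac{d}{dS_t}\Pi_t(S_t,\xi_t) + \nabla_{\xi_t}\Pi_t(S_t,\xi_t)^T\cdot\sigma_t\cdot P_{S,\xi,t}$ and $q_t = \lVert \nabla_{\xi_t}\Pi_t(S_t,\xi_t)^T\cdot\sigma_t\cdot b_t\rVert_2$.

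Next I would specialise every quantity to $d=1$. Here $\nabla_{\xi_t}\Pi_t$ is the ordinary derivative $\frac{d}{d\xi_t}\Pi_t$, the matrix $\sigma_t$ is a scalar, the correlation matrix $P_t$ has $P_{\xi,t}=1$, and $P_{S,\xi,t}\in[-1,1]$ is just the instantaneous correlation between the two driving Brownian motions. The defining relation $b_t\cdot b_t^T = P_{\xi,t} - P_{S,\xi,t}\cdot P_{S,\xi,t}^T$ from \cref{lemmaDistribXiCondS} then reads $b_t^2 = 1 - P_{S,\xi,t}^2$, so one may take $b_t = \sqrt{1-P_{S,\xi,t}^2}$; the sign of $b_t$ is immaterial since $q_t$ is a norm. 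Substituting, $c_t = S_t\beta_t\frac{d}{dS_t}\Pi_t + \sigma_t P_{S,\xi,t}\frac{d}{d\xi_t}\Pi_t$ and $q_t^2 = \sigma_t^2\bigl(1-P_{S,\xi,t}^2\bigr)\bigl(\frac{d}{d\xi_t}\Pi_t\bigr)^2$.

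The last step is the short computation of $c_t^2 + q_t^2$. Setting $a := S_t\beta_t\frac{d}{dS_t}\Pi_t$ and $b := \sigma_t\frac{d}{d\xi_t}\Pi_t$, one has $c_t^2 = a^2 + 2ab\,P_{S,\xi,t} + b^2 P_{S,\xi,t}^2$ and $q_t^2 = b^2 - b^2 P_{S,\xi,t}^2$, so that the two $P_{S,\xi,t}^2$ contributions cancel and $c_t^2 + q_t^2 = a^2 + b^2 + 2ab\,P_{S,\xi,t}$, which is exactly the radicand in the statement once $a$ and $b$ are written out. Plugging this into the expression from \cref{propFinal} gives the claim. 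I do not expect any genuine obstacle: the only two points worth spelling out are the identification $b_t = \sqrt{1-P_{S,\xi,t}^2}$ and the recombination of the cross term, where the coefficient $2$ survives and the pair of $P_{S,\xi,t}^2$ terms collapses to the single factor $1$.
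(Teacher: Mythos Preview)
Your proposal is correct and is exactly the intended derivation: the paper states the corollary without proof, treating it as an immediate specialisation of \cref{propFinal} via the alternative expressions for $c_t$ and $q_t$ given in the preceding remark. You have simply made explicit the scalar identifications $P_{\xi,t}=1$, $b_t^2=1-P_{S,\xi,t}^2$ and the cancellation in $c_t^2+q_t^2$, which the paper leaves to the reader.
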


	Observe that this is compatible with the Stochastic Volatility model's
VaR in \cref{eqGenericVaR}.

	\hypertarget{normal-distribution-for-s_th}{%
\subsubsection{\texorpdfstring{Normal distribution for
\(S_{t+h}\)}{Normal distribution for S\_\{t+h\}}}\label{normal-distribution-for-s_th}}

	\Cref{lemmaDistribXiCondS} still holds considering normal increments for
\(S_{t+h}\):
\[S_{t+h} = S_t\Bigl(1+\alpha_th + \beta_t(W_{0,t+h}-W_{0,t})\Bigr),\]
appropriately redefining the quantity \(m_t(s)\). Indeed, in this case
\(m_t(s)\) becomes
\(\mu_th + \frac{s - S_t(1 + \alpha_th)}{S_t\beta_t}\sigma_t\cdot P_{S,\xi,t}\).

The results in \cref{lemmaVaRSqrtH,propFinal} are exactly the same as
for the log-normal case. Indeed, the probability density function of
\(S_{t+h}\) is
\[p_{S_{t+h}}(s) = \frac{1}{S_t\beta_t\sqrt{2\pi h}}\exp\Biggl(-\Biggl(\frac{s - S_t(1+\alpha_th)}{S_t\beta_t\sqrt{2h}}\Biggr)^2\Biggr)\]
and the conclusion can be easily attained as in the previous case, using
the transformation \(y=\frac{s-S_t(1+\alpha_th)}{S_t\beta_t\sqrt{h}}\)
as done in \cref{proof-of}.

	\hypertarget{t-student-distribution-for-s_th}{%
\subsubsection{\texorpdfstring{t-Student distribution for
\(S_{t+h}\)}{t-Student distribution for S\_\{t+h\}}}\label{t-student-distribution-for-s_th}}

	As in \cref{t-student-short-term-model-free-var-formulation}, we
consider the case where the increments of the underlier \(S_t\) follow a
t-Student distribution, while increments of the implied volatility
\(\sigma_t\) are Gaussian conditional to \(S_{t+h}\).

\Cref{lemmaDistribXiCondS} holds true in the Gaussian case because the
random variable \(Z = \Delta W_t - \Delta W_{0,t}P_{S,\xi}\) is still
Gaussian and its decorrelation with \(\Delta W_{0,t}\) implies its
independence. However, in this case \(\Delta W_{0,t}\) is substituted
with the t-Student \(T_t\) and we cannot derive the same result. We need
then some additional hypothesis to derive the equivalent of
\cref{lemmaXiGauss} when \(S_{t+h}\) is a t-Student, and in particular
we shall take \cref{eqXiCondS} as granted a priori.

	\begin{lemma}\label{lemmaXiGaussTStudent}

Consider the model of \cref{eqCReisModel} and the hypothesis that
$$S_{t+h} = S_t(1+\alpha_th + \beta_t T_{t+h})$$
where $T_{t+h}\in\mathbb R$ is a t-Student with $\nu_t$ degrees of freedom, null mean and variance equal to $h$, and $\xi_{t+h}-\xi_t$ conditional to $S_{t+h}=s$ is a Gaussian random variable with mean $m_t(s)$ and covariance matrix $V_t$ with
\begin{align*}
m_t(s) &= \mu_th + \frac{s - S_t(1 + \alpha_th)}{S_t\beta_t}\sigma_t\cdot P_{S,\xi,t},\\
V_t &= h(\sigma_t\cdot b_t)\cdot(\sigma_t\cdot b_t)^T,
\end{align*}
and $b_t\in\mathbb{R}^{d\times d}$ is a matrix such that $b_t\cdot b_t^T = P_{\xi,t}-P_{S,\xi,t}\cdot P_{S,\xi,t}^T$, for certain parameters $\mu_t$, $P_{S,\xi,t}\in\mathbb{R}^d$, $\sigma_t$, $P_{\xi,t}\in\mathbb{R}^{d\times d}$. Then the $h$-days VaR at confidence level $\theta$ of the portfolio \cref{eqPi} is the value of $v(\theta,h)$ which solves
\begin{equation*}
1-\theta = \int_0^\infty \Phi\biggl(\frac{v(\theta,h)-\hat A(h,s)}{\lVert\hat B(h,s)\rVert_2}\biggr)\,dF_{S_{t+h}}(s)s.
\end{equation*}
where \cref{eqAB,defHatAHatB} define $\hat A(h,s)$ and $\hat B(h,s)$.

\end{lemma}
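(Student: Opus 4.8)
The strategy is to reproduce the argument leading to \cref{lemmaXiGauss} essentially verbatim, observing that the only two places where the log-normal Euler dynamics of $S_{t+h}$ were used there are: (i) \cref{lemmaDistribXiCondS}, which produced the conditional law of $\xi_{t+h}-\xi_t$ given $S_{t+h}$, and (ii) the measure $F_{S_{t+h}}$ in the tower-property identity \cref{eqIntPPnLCondS}. Under the present hypotheses, ingredient (i) is postulated outright (with the mean $m_t(s)$ taking the affine-in-$S$ form appropriate to $S_{t+h} = S_t(1+\alpha_t h+\beta_t T_{t+h})$, exactly as in the ``normal distribution for $S_{t+h}$'' discussion), so the only genuine change is to carry the shifted-and-rescaled t-Student law of $S_{t+h}$ through the final integration.

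Concretely, I would first recall that the decomposition \cref{eqPnL}, namely $\text{P\&L}=A(h,S_{t+h})+B(h,S_{t+h})\cdot(\xi_{t+h}-\xi_t)$, is a purely algebraic consequence of the affine representation \cref{eqCReisModel} and involves no distributional assumption whatsoever. Conditioning on $S_{t+h}=s$ and invoking the hypothesis $\xi_{t+h}-\xi_t|(S_{t+h}=s)\sim N\bigl(m_t(s),V_t\bigr)$ with $V_t=h(\sigma_t\cdot b_t)\cdot(\sigma_t\cdot b_t)^T$, I would represent this conditional increment as $m_t(s)+\sqrt h\,\sigma_t\cdot b_t\cdot X$ with $X\sim N(0,I_d)$, precisely \cref{eqXiCondS}. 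Plugging this into \cref{eqPnL} gives $\text{P\&L}|(S_{t+h}=s)=\hat A(h,s)+\hat B(h,s)\cdot X$, where $\hat A$, $\hat B$ are the quantities of \cref{defHatAHatB}; note that this definition refers only to $A$, $B$, $m_t$, $\sigma_t$ and $b_t$, all available verbatim, so nothing must be rewritten even though $m_t(s)$ now has the new form. Since an affine functional of a Gaussian vector is again Gaussian, $\text{P\&L}|(S_{t+h}=s)\sim N\bigl(\hat A(h,s),\lVert\hat B(h,s)\rVert_2^2\bigr)$, whence $P\bigl(\text{P\&L}\leq v(\theta,h)|S_{t+h}=s\bigr)=\Phi\bigl((v(\theta,h)-\hat A(h,s))/\lVert\hat B(h,s)\rVert_2\bigr)$.

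It then remains to de-condition via \cref{eqIntPPnLCondS}, now with $F_{S_{t+h}}$ the law of $S_t(1+\alpha_t h+\beta_t T_{t+h})$ --- an affine image of a t-Student, whose density is the shifted-and-rescaled version of \cref{eqPdfSTstudent} --- and to set the resulting expression equal to $1-\theta$, which yields the stated integral equation for $v(\theta,h)$. The one point worth a remark is the range of integration: since the support of this $S_{t+h}$ is all of $\mathbb R$ rather than $(0,\infty)$, the tower identity formally integrates over $\mathbb R$, and restricting to $\{s>0\}$, the economically meaningful range of the underlier, costs an error bounded by $P(S_{t+h}\leq 0)$, which is negligible for the small MPOR $h$ motivating this analysis.

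I do not anticipate any real obstacle: the substantive work was already done in deriving \cref{eqPnL} and \cref{lemmaXiGauss}, and the purpose of this lemma is precisely to isolate which ingredients are structural --- the affine P\&L decomposition and the conditional normality of $\xi_{t+h}-\xi_t$ --- from those that were merely consequences of the log-normal Euler scheme, recording that the semi-closed VaR characterisation survives when the underlier increments are taken t-Student. The only mild care required is to confirm, as noted above, that $\hat A$ and $\hat B$ keep their form under the new mean function $m_t(s)$; because \cref{defHatAHatB} is phrased in terms of $m_t$ itself, this is immediate.
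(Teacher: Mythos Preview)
Your proposal is correct and matches the paper's approach: the paper does not spell out a separate proof for this lemma but explicitly states that, since \cref{lemmaDistribXiCondS} fails in the t-Student setting, one simply takes \cref{eqXiCondS} as a hypothesis and then the derivation of \cref{lemmaXiGauss} carries over verbatim --- exactly the two-ingredient analysis you give. Your additional remark about the integration range (the support of $S_{t+h}$ being $\mathbb R$ rather than $(0,\infty)$) is a valid point that the paper passes over in silence.
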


	The calibration of the parameters \(\mu_t\), \(P_{S,\xi,t}\),
\(\sigma_t\) and \(P_{\xi,t}\) is practically difficult if performed
from the distribution of \(\xi_{t+h}-\xi_t\) conditional to \(S_{t+h}\).
However, as in \cref{t-student-short-term-model-free-var-formulation},
we can recover the moments of the marginal distribution, allowing an
easy calibration of the parameters based on the historical mean and
covariances of \(\xi_{t+h}-\xi_t\).

	\begin{corollary}\label{corollDistribXiNoCond}

Under the model
$$S_{t+h} = S_t(1+\alpha_th + \beta_t T_{t+h})$$
where $T_{t+h}\in\mathbb R$ is a t-Student with $\nu_t$ degrees of freedom, null mean and variance equal to $h$, if $\xi_{t+h}-\xi_t$ conditional to $S_{t+h}=s$ is a Gaussian random variable with mean $m_t(s)$ and covariance matrix $V_t$ as in \cref{lemmaXiGaussTStudent}, then
\begin{align*}
E[\xi_{t+h}-\xi_t] &= \mu_th\\
\text{Cov}[\xi_{t+h}-\xi_t] &= h\sigma_t\cdot P_{\xi,t}\cdot\sigma_t^T\\
\text{Cov}\biggl[\begin{pmatrix}S_{t+h} \\ \xi_{t+h}-\xi_t\end{pmatrix}\biggr] &= \begin{pmatrix}
    \beta_t^2S_t^2h & \begin{matrix} \beta_tS_th(\sigma_t\cdot P_{S,\xi,t})^T \end{matrix}\\
    \begin{matrix} \beta_tS_th\sigma_t\cdot P_{S,\xi,t} \end{matrix} & h\sigma_t\cdot P_{\xi,t}\cdot\sigma_t^T
\end{pmatrix}.
\end{align*}

\end{corollary}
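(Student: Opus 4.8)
The plan is to obtain all three identities from the tower property together with the law of total covariance, conditioning systematically on $S_{t+h}$. The key structural observation is that, under the hypotheses of \cref{lemmaXiGaussTStudent}, the conditional mean $m_t(s)$ is an \emph{affine} function of $s$ while the conditional covariance $V_t$ does \emph{not} depend on $s$; consequently every conditional moment that appears is either a constant or a linear function of $S_{t+h}$, and the whole computation reduces to knowing $E[S_{t+h}]$ and $\text{Var}[S_{t+h}]$. Since $T_{t+h}$ has null mean and variance $h$, these are $E[S_{t+h}] = S_t(1+\alpha_t h)$ and $\text{Var}[S_{t+h}] = \beta_t^2 S_t^2 h$; it is worth noting that the t-Student hypothesis plays no role here beyond fixing these two moments.

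First I would compute $E[\xi_{t+h}-\xi_t] = E\bigl[E[\xi_{t+h}-\xi_t\mid S_{t+h}]\bigr] = E[m_t(S_{t+h})]$. Writing $m_t(s) = \mu_t h + \frac{s - S_t(1+\alpha_t h)}{S_t\beta_t}\,\sigma_t\cdot P_{S,\xi,t}$ and taking expectations, the coefficient of $\sigma_t\cdot P_{S,\xi,t}$ vanishes because $E[S_{t+h}] = S_t(1+\alpha_t h)$, leaving $E[\xi_{t+h}-\xi_t] = \mu_t h$. Next, for $\text{Cov}[\xi_{t+h}-\xi_t]$ I would invoke the law of total covariance, $\text{Cov}[\xi_{t+h}-\xi_t] = E\bigl[\text{Cov}[\xi_{t+h}-\xi_t\mid S_{t+h}]\bigr] + \text{Cov}\bigl[E[\xi_{t+h}-\xi_t\mid S_{t+h}]\bigr] = V_t + \text{Cov}[m_t(S_{t+h})]$. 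Since $m_t(S_{t+h}) - \mu_t h$ is the scalar random variable $\frac{S_{t+h}-S_t(1+\alpha_t h)}{S_t\beta_t}$ times the fixed vector $\sigma_t\cdot P_{S,\xi,t}$, its covariance matrix is the rank-one matrix $\frac{\text{Var}[S_{t+h}]}{S_t^2\beta_t^2}(\sigma_t\cdot P_{S,\xi,t})(\sigma_t\cdot P_{S,\xi,t})^T = h\,(\sigma_t\cdot P_{S,\xi,t})(\sigma_t\cdot P_{S,\xi,t})^T$. It then remains to combine this with $V_t = h(\sigma_t\cdot b_t)\cdot(\sigma_t\cdot b_t)^T = h\,\sigma_t\cdot(b_t\cdot b_t^T)\cdot\sigma_t^T = h\,\sigma_t\cdot\bigl(P_{\xi,t}-P_{S,\xi,t}\cdot P_{S,\xi,t}^T\bigr)\cdot\sigma_t^T$: the piece $h\,\sigma_t\cdot(P_{S,\xi,t}\cdot P_{S,\xi,t}^T)\cdot\sigma_t^T$ equals $h\,(\sigma_t\cdot P_{S,\xi,t})(\sigma_t\cdot P_{S,\xi,t})^T$, so it cancels against the rank-one term and one is left with $\text{Cov}[\xi_{t+h}-\xi_t] = h\,\sigma_t\cdot P_{\xi,t}\cdot\sigma_t^T$.

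For the joint block, conditioning on $S_{t+h}$ makes $S_{t+h}$ deterministic, so $\text{Cov}[S_{t+h}, \xi_{t+h}-\xi_t\mid S_{t+h}]=0$ and the law of total covariance gives $\text{Cov}[S_{t+h},\xi_{t+h}-\xi_t] = \text{Cov}\bigl[S_{t+h}, m_t(S_{t+h})\bigr] = \frac{\text{Var}[S_{t+h}]}{S_t\beta_t}\,\sigma_t\cdot P_{S,\xi,t} = \beta_t S_t h\,\sigma_t\cdot P_{S,\xi,t}$; together with $\text{Var}[S_{t+h}] = \beta_t^2 S_t^2 h$ this assembles into the stated block matrix, with top-right block $\beta_t S_t h\,(\sigma_t\cdot P_{S,\xi,t})^T$ by transposition.

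No step presents a genuine obstacle; the only place requiring care is the matrix bookkeeping in the covariance computation, specifically the identification $\sigma_t\cdot(P_{S,\xi,t}\cdot P_{S,\xi,t}^T)\cdot\sigma_t^T = (\sigma_t\cdot P_{S,\xi,t})(\sigma_t\cdot P_{S,\xi,t})^T$ and keeping the transpose conventions consistent. As an alternative that makes the cancellation fully transparent, one could bypass the law of total covariance and argue directly from the explicit representation $\xi_{t+h}-\xi_t = m_t(S_{t+h}) + \sqrt h\,\sigma_t\cdot b_t\cdot X$ with $X\sim N(0,I_d)$ independent of $S_{t+h}$ — this being \cref{eqXiCondS}, which may be taken as granted here — and simply expand the moments of this sum.
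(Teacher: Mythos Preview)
Your proof is correct and follows the same approach as the paper. The paper does not spell out a proof for this corollary, but just before stating it refers back to the earlier remark in \cref{t-student-short-term-model-free-var-formulation}, where the scalar analogue is obtained by the tower property; your argument is precisely the vector version of that computation, with the law of total covariance and the identity $b_t\cdot b_t^T = P_{\xi,t}-P_{S,\xi,t}\cdot P_{S,\xi,t}^T$ producing the required cancellation.
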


	The result in \cref{lemmaVaRSqrtH} still holds true in the t-Student
case. The proof is equivalent to the one given in \cref{proof-of-lemma},
with the adaptations regarding the distribution of \(S_{t+h}\) and
\(\xi_{t+h}-\xi_t\). These adaptations can be found in the proof of
\cref{eqModelFreeVaRTstudent} in
\cref{t-student-short-term-model-free-var-formulation}. As a
consequence, the function \(v(\theta,h)\) is of the form
\(u(\theta)\sqrt h+o(\sqrt h)\) for \(h\) small.

	The probability density function of \(S_{t+h}\) is as in
\cref{eqPdfSTstudent}. In this way, using the transformation
\(y=\frac{s-S_t(1+\alpha_th)}{S_t\beta_t\sqrt{h}}\) and repeating the
calculations in \cref{proof-of-the-pointwise-convergence}, we find
\cref{eqIntPhiH0} with exactly the same values for \(c_t\) and \(q_t\).
We then look for the value of \(u(\theta)\) such that
\[1-\theta = E\biggl[P\biggl(\frac{q_tX+c_tY}{\sqrt{c_t^2+q_t^2}}\leq \frac{u(\theta)}{\sqrt{c_t^2+q_t^2}}\biggr)\biggr].\]

	The random variable \(Z=\frac{q_tX+c_tY}{\sqrt{c_t^2+q_t^2}}\) is not
Gaussian in this case since it is the sum of a standard Gaussian random
variable and a standard t-Student random variable, which are
independent. However, it still holds that the initial margin in the case
of t-Student returns is \begin{equation*}
\text{VaR}_{\theta,t}(h) = F_Z^{-1}\bigl(1-\theta\bigr)\sqrt{c_t^2+q_t^2}\sqrt{h} + o\bigl(\sqrt h\bigr).
\end{equation*}

The empirical quantile of \(Z\) can be recovered as detailed in
\cref{t-student-short-term-model-free-var-formulation}.

	The above observations lead us to the following.

\begin{corollary}\label{corollTStudent}

Under the framework of \cref{lemmaXiGaussTStudent}, the $h$-days VaR at confidence level $\theta$ of the portfolio \cref{eqPi} is
\begin{equation*}
\text{VaR}_{\theta,t}(h) = F^{-1}_Z\bigl(1-\theta\bigr)\sqrt{c_t^2+q_t^2}\sqrt{h} + o\bigl(\sqrt h\bigr)
\end{equation*}
where $c_t$, $q_t$ are defined in \cref{eqDefCDRho} and $Z=\frac{qX+cY}{\sqrt{c_t^2+q_t^2}}$ with $X$ and $Y$ independent and $X$ standard Gaussian, $Y$ standard t-Student with $\nu_t$ degrees of freedom.

\end{corollary}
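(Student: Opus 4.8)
The plan is to obtain \cref{corollTStudent} as an immediate consequence of two ingredients already in place: the integral characterisation of the VaR furnished by \cref{lemmaXiGaussTStudent}, and the $\sqrt h$–expansion of $v(\theta,h)$, combined with a single limit computation that is formally identical to the log‑normal one carried out in \cref{proof-of-the-pointwise-convergence,proof-of}. The only genuinely new point compared to \cref{propFinal} is that the limiting random variable is no longer Gaussian, so in the end one keeps its quantile function abstract.

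First I would record that $v(\theta,h)=u(\theta)\sqrt h+o(\sqrt h)$ in the present setting as well. The argument of \cref{proof-of-lemma} only uses that $\text{P\&L}/\sqrt h$ converges in law to a variable with a strictly positive density: from the first‑order expansion $\text{P\&L}\mid(S_{t+h}=s)=\hat A(h,s)+\hat B(h,s)\cdot X$ of \cref{defHatAHatB} and the change of variables $s=s(y):=S_t(1+\alpha_th+\beta_t\sqrt h\,y)$ (under which $F_{S_{t+h}}$ pushes forward to the standard $\nu_t$‑d.o.f.\ t‑Student density $p_T$, since $T_{t+h}$ has $\nu_t$ degrees of freedom and variance $h$), the limit is $c_tY+q_tX$ up to the normalisation $\sqrt{c_t^2+q_t^2}$, a convolution involving a Gaussian, hence has a strictly positive density. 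This is exactly the adaptation announced in the paragraph preceding the statement, so I would simply cite it rather than rewrite it.

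Then, starting from $1-\theta=\int_0^\infty\Phi\!\bigl((v(\theta,h)-\hat A(h,s))/\lVert\hat B(h,s)\rVert_2\bigr)\,dF_{S_{t+h}}(s)$ (\cref{lemmaXiGaussTStudent}) and applying the same substitution $y=(s-S_t(1+\alpha_th))/(S_t\beta_t\sqrt h)$, I would expand $\hat A$ and $\hat B$ to first order in $\sqrt h$ along $s=s(y)$. Using $m_t(s(y))=\mu_th+\sqrt h\,y\,\sigma_t\cdot P_{S,\xi,t}$, $B(h,s(y))\to B(0,S_t)$, and the $\sqrt h$–coefficient of $A(h,s(y))$ obtained by differentiating the affine price in $s$ through the explicit factor, the discount/forward, and the log‑moneyness, one gets $\hat A(h,s(y))/\sqrt h\to c_t y$ and $\lVert\hat B(h,s(y))\rVert_2/\sqrt h\to\lVert B(0,S_t)\cdot\sigma_t\cdot b_t\rVert_2=q_t$, with $c_t,q_t$ as in \cref{eqDefCDRho} (the $-\mathbbm 1_{\text{Puts}}$ and $-\partial_k$ terms in $c_t$ are precisely what the $s$–differentiation produces). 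This is the step referenced as ``repeating the calculations'' and it is verbatim the log‑normal one, so I would not redo it. Since $\Phi\le 1$ and $p_T\in L^1$, dominated convergence together with $v(\theta,h)=u(\theta)\sqrt h+o(\sqrt h)$ gives
\[
1-\theta=\int_{\mathbb R}p_T(y)\,\Phi\!\Bigl(\frac{u(\theta)-c_ty}{q_t}\Bigr)\,dy=E\!\Bigl[\Phi\!\Bigl(\frac{u(\theta)-c_tY}{q_t}\Bigr)\Bigr]=P\bigl(q_tX+c_tY\le u(\theta)\bigr),
\]
with $X\sim N(0,1)$ and $Y$ standard t‑Student with $\nu_t$ degrees of freedom, independent; the last equality writes $\Phi(x)=P(X\le x)$ and uses independence. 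Hence $u(\theta)$ is the $(1-\theta)$–quantile of $q_tX+c_tY$, i.e.\ $u(\theta)=\sqrt{c_t^2+q_t^2}\,F_Z^{-1}(1-\theta)$ with $Z=(q_tX+c_tY)/\sqrt{c_t^2+q_t^2}$, which is unambiguous because $Z$ has a strictly positive density; multiplying by $\sqrt h$ yields the stated formula.

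The only delicate bookkeeping point is justifying the interchange of limit and integral: for fixed $h$ the spot $s(y)$ becomes negative for large $\lvert y\rvert$, but this is harmless since the integrand is bounded by $1$ and, for each $y$, $s(y)>0$ once $h$ is small enough, so pointwise convergence holds a.e.\ and dominated convergence applies; crucially the heavier t‑Student tail causes no integrability issue precisely because $\Phi$ is bounded — in contrast with what would be required if one needed tail control on $\hat A$ itself. Everything else is identical to the Gaussian proofs of \cref{propFinal,lemmaVaRSqrtH} and can be quoted.
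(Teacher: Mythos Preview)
Your proposal is correct and follows essentially the same approach as the paper: the paper likewise invokes the $\sqrt h$ expansion of $v(\theta,h)$ (noting that the proof of \cref{lemmaVaRSqrtH} carries over with the t-Student adaptations of \cref{t-student-short-term-model-free-var-formulation}), performs the same substitution $y=(s-S_t(1+\alpha_th))/(S_t\beta_t\sqrt h)$, repeats the pointwise convergence of \cref{proof-of-the-pointwise-convergence} to reach \cref{eqIntPhiH0}, and identifies $u(\theta)$ as the $(1-\theta)$-quantile of $q_tX+c_tY$. Your remark that $s(y)$ may become negative for large $|y|$ but that boundedness of $\Phi$ salvages dominated convergence is a useful technical detail the paper leaves implicit.
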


	\hypertarget{numerical-experiments}{%
\section{Numerical experiments}\label{numerical-experiments}}

	\hypertarget{backtesting-option-portfolios}{%
\subsection{Backtesting option
portfolios}\label{backtesting-option-portfolios}}

	Before going into the numerical experiments, it is worth focusing on the
specific issues arising while backtesting option portfolios. Indeed,
options are contracts with a fixed strike and a fixed expiry, so that
considering a backtest on a real fixed contract is awkward for two
reasons: at maturity the option will expire and the backtest could not
continue anymore; the option could become far OTM or ITM in time,
completing losing interest in the market and becoming illiquid or even
not traded anymore.

Moreover, in order to focus on a given risk (like the calendar spread
one), and its adequate coverage by the margin model, it is better to
consider a portfolio with a constant risk profile across time, and so
constant specifications in terms of moneyness and time-to-maturity.

For this reason, options are generally backtested for fixed log-forward
moneyness (or delta) and fixed time-to-maturity, rather than fixed
contract. Of course, these desired options are not always available
among the market quoted ones, so that two possibilities arise:

\begin{enumerate}
\def\labelenumi{\arabic{enumi}.}
\tightlist
\item
  Considering the nearest in log-forward moneyness (or delta) and
  time-to-maturity real quoted option.
\item
  Considering synthetic option prices on the chosen fixed log-forward
  moneyness (or delta) and fixed time-to-maturity obtained via the model
  pricing criteria.
\end{enumerate}

In the first case, the backtested portfolios will possibly change every
day, depending on how much the ATM level has moved and on the rolling
maturity. In the second case, the VaR estimations are compared to model
\(\text{P\&L}\)s rather than real ones, so that if the calibration model
is not good enough, backtesting results could be misleading.

In general, there is no preferred way to backtest option portfolios and
CCPs may adopt both methodologies. We recommend to perform the two
approaches for production, because the synthetic option prices, due to
the complexity of the data treatments performed, may eventually not
reflect fully faithfully the effective market returns when they are
available, as used directly in the first approach above.

	\hypertarget{var-formula-in-the-heston-model}{%
\subsection{VaR formula in the Heston
model}\label{var-formula-in-the-heston-model}}

	The VaR formula in \cref{eqGenericVaR} can be applied to any Stochastic
Volatility model. In this section we test it on the classic Heston model
\begin{align*}
dS_t &= \alpha_tS_tdt + \sqrt{\nu_t}S_tdW_{0,t}\\
d\nu_t &= \kappa(\theta-\nu_t)dt + \xi\sqrt{\nu_t}dW_t\\
dW_{0,t}dW_t &= \rho dt.
\end{align*} In this case, \cref{eqGenericVaR} has the form
\begin{equation}\label{eqHestonVaR}
\text{VaR}_{\theta,t}(h) = \Phi^{-1}(1-\theta)\sqrt{S_t^2\nu_t\Bigl(\frac{d}{dS_t}\Pi_t(S_t,\nu_t)\Bigr)^2 + \xi^2\nu_t\Bigl(\frac{d}{d\nu_t}\Pi_t(S_t,\nu_t)\Bigr)^2 + 2\rho \xi S_t\nu_t\frac{d}{dS_t}\Pi_t(S_t,\beta_t)\frac{d}{d\nu_t}\Pi_t(S_t,\nu_t)}\sqrt h.
\end{equation}

	Firstly, we simulate one year (\(365\) days) history of the process
\((S_t,\nu_t)\) with a simple Euler scheme of the form \begin{align*}
S_{t+\delta t} &= S_t\bigl(1 + \alpha_t\delta t + \sqrt{\nu_t}\sqrt{\delta t}X_0\bigr)\\
\nu_{t+\delta t} &= \nu_t + \kappa(\theta-\nu_t)\delta t + \xi\sqrt{\nu_t}\sqrt{\delta t}X
\end{align*} where \(X_0\), \(X\) are standard Gaussian random variables
with correlation \(\rho\). Differences are negligible using the
log-formulation for \(S_t\), i.e.
\[S_{t+\delta t} = S_t\exp\Bigl(\Bigl(\alpha_t-\frac{\nu_t}2\Bigr)\delta t + \sqrt{\nu_t}\sqrt{\delta t}X_0\Bigr),\]
or using a Milstein scheme instead of the Euler's one.

At this point, for different outright, calendar and butterfly
portfolios, we compute daily prices using the semi-analytical formula
for a call option in the Heston model described in \cite{kahl2005not}.
We take null rates, so that the forward price coincides with the spot
value and the discount factor is \(1\). Finally we compare real PnLs
with \(0.99\)-VaR estimations as in \cref{eqHestonVaR} on different MPOR
horizons. In order to compute portfolio derivatives with respect to the
spot and the volatility of the spot, we use the average between the
corresponding backward and forward finite differences: \begin{align*}
\frac{d}{dS_t}\Pi_t(S_t,\nu_t) &\approx \frac{\Pi_t(S_t+\varepsilon,\nu_t)-\Pi_t(S_t-\varepsilon,\nu_t)}{2\varepsilon},\\
\frac{d}{d\nu_t}\Pi_t(S_t,\nu_t) &\approx \frac{\Pi_t(S_t,\nu_t+\varepsilon)-\Pi_t(S_t,\nu_t-\varepsilon)}{2\varepsilon}.
\end{align*}

Note in particular that these quantities are \emph{not} the
Black-Scholes ones defined through the sensitivities of the
Black-Scholes formula evaluated at the implied volatility corresponding
to the Heston model price.

	We use calibrated Heston parameters on S\&P500 on December 2015 (see
\cite{cao2021pricing}), in particular
\((\kappa, \sqrt\theta, \xi, \rho)=(6.169, 0.16168, 0.477, -0.781)\),
and we start with initial values \(S_0 = 2054\) and
\(\sqrt{\nu_0} = 0.15562\). For the Euler scheme, we choose a step
\(\delta t\) of \(10^{-1}\) days. The portfolios considered are outright
calls with delta in \(\{0.2, 0.35, \delta_{\text{ATM}}, 0.65, 0.8\}\)
and time-to-maturity in \(\{30, 90, 180, 365\}\) days, and the resulting
combinations of calendar spread and butterfly spread portfolios. In
particular, a calendar spread is a portfolio with one short call at a
fixed strike \(K\) and maturity \(T_1\) and one long call with same
strike \(K\) and maturity \(T_2>T_1\). For a fixed delta, the common
strike of the two options is chosen to be the one related to the
shortest maturity. A butterfly spread is composed of two long calls with
deltas \(\delta\) and \(1-\delta\) respectively, and two short ATM
calls. For butterfly spreads, we also test the deltas
\(0.1, 0.3, 0.4, 0.45\).

The number of tested portfolios is then: \(5\times4 = 20\) outrights,
\({4\choose 2}\times5 = 30\) calendar spreads, and \(6\times4 = 24\)
butterfly spreads; in total \(74\) portfolios.

For each tested portfolio, we compute the coverage ratio as the number
of days where the model VaR covers the realized loss over the total
number of tested days, and the size of losses as the ratio between the
margin loss (difference between realized loss and model VaR) and the
portfolio price. The average and the median over all portfolios is
displayed in \cref{tableCoverageHeston}. The results are very
satisfactory for all MPORs. Indeed, the average coverage meets the
\(0.99\)-VaR requirement and breaches are of a very small size below
\(7\%\) of the portfolio value.

\begin{table}
    \centering
    \begin{tabular}{ |c|c|c||c|c| }
    \hline
    MPOR (days) & \multicolumn{2}{|c|}{Coverage} & \multicolumn{2}{|c|}{Size of loss}\\
    \hline
     & Mean & Median & Mean & Median\\
    \hline
    $1$ & $0.9927$ & $0.9945$ & $0.0485$ & $0.0204$\\
    $2$ & $0.9902$ & $0.9945$ & $0.0645$ & $0.0353$\\
    $3$ & $0.9896$ & $0.9945$ & $0.0682$ & $0.0422$\\
    \hline
    \end{tabular}
    \caption{Average coverage and size of loss of \cref{eqHestonVaR} on $74$ option portfolios on simulated Heston data.}
    \label{tableCoverageHeston}
\end{table}

	Similarly, we compute the initial margin using the short-term model-free
formula described in \cref{a-short-term-model-free-formula}. In
particular, we simulate \(5\) years history of an Heston process with
same parameters as in the previous test and compute the initial margin
for the same portfolios on the last year's observations (the previous
history is used to calibrate parameters). The formula used is then
\cref{eqModelFreeVaR}, where the delta and the vega Greeks are the
Black-Scholes ones, the spot volatility, the vol-of-vol and the
correlation between risk factors are computed with the EWMA
specification, and the derivative of the implied volatility with respect
to the log-forward moneyness in computed via finite differences. The
implied volatility point used to compute the EWMA correlation is the
\(1\text{M ATM}\) point.

As in the previous test, we compute the average coverage and size of
loss for each tested MPOR. Results are shown in
\cref{tableCoverageHeston2}. Results are less conservative than in the
previous test since the coverage is around \(0.983\). However, the size
of loss is still very small compared to the portfolio value, and
actually smaller than in the previous test.

\begin{table}
    \centering
    \begin{tabular}{ |c|c|c||c|c| }
    \hline
    MPOR (days) & \multicolumn{2}{|c|}{Coverage} & \multicolumn{2}{|c|}{Size of loss}\\
    \hline
     & Mean & Median & Mean & Median\\
    \hline
    $1$ & $0.9853$ & $0.9863$ & $0.0404$ & $0.0194$\\
    $2$ & $0.9826$ & $0.9822$ & $0.0565$ & $0.0368$\\
    $3$ & $0.9813$ & $0.9808$ & $0.0496$ & $0.0255$\\
    \hline
    \end{tabular}
    \caption{Average coverage and size of loss of the short-term model-free VaR \cref{eqModelFreeVaR} on $74$ option portfolios on simulated Heston data, under the hypothesis of log-normal distribution of spot returns.}
    \label{tableCoverageHeston2}
\end{table}

Results can actually be improved using the hypothesis of a t-Student
distribution for spot returns and redefining the VaR for MPOR \(h\) as
in \cref{eqModelFreeVaRTstudent} where the distribution of \(Z\) is
obtained empirically as explained in
\cref{t-student-short-term-model-free-var-formulation}.
\Cref{tableCoverageHeston3} shows results when considering \(5\) degrees
of freedom in the t-Student distribution of spot returns. As expected,
results are more conservative than the normal case and satisfy the
\(0.99\) coverage requirement.

\begin{table}
    \centering
    \begin{tabular}{ |c|c|c||c|c| }
    \hline
    MPOR (days) & \multicolumn{2}{|c|}{Coverage} & \multicolumn{2}{|c|}{Size of loss}\\
    \hline
     & Mean & Median & Mean & Median\\
    \hline
    $1$ & $0.9907$ & $0.9918$ & $0.0362$ & $0.0164$\\
    $2$ & $0.9886$ & $0.9918$ & $0.0596$ & $0.0364$\\
    $3$ & $0.9889$ & $0.9945$ & $0.0485$ & $0.0291$\\
    \hline
    \end{tabular}
    \caption{Average coverage and size of loss of the short-term model-free VaR \cref{eqModelFreeVaRTstudent} on $74$ option portfolios on simulated Heston data, under the hypothesis of t-Student with $5$ degrees of freedom distribution of spot returns.}
    \label{tableCoverageHeston3}
\end{table}

	\hypertarget{coverage-performances-of-the-short-term-model-free-var}{%
\subsection{Coverage performances of the short-term model-free
VaR}\label{coverage-performances-of-the-short-term-model-free-var}}

	In this section we show the results of coverage of the short-term
model-free \(0.99\) VaR formula in
\cref{a-short-term-model-free-formula} compared with the classical FHS
model described in \cref{filtered-historical-simulation}.

We use a database of S\&P500 data provided to Zeliade by the Clearify
project\footnote{Clearify is a collaboration between Zeliade and the Imperial College Mathematical Finance Department funded by an Imperial Faculty of Natural Sciences Strategic Research Funding Award.}
on end of the day option prices. Firstly, we clean the rough data
removing options with \(0\) volume and use the put-call-parity on mid
prices to extrapolate forward and discount factors for each quoted
maturity having at least two put-call couples. Then, we remove all calls
not satisfying the arbitrage bounds
\[\text{DF}_t(T)(F_t(T)-K)^+\leq C_t(T,K)\leq \text{DF}_t(T)F_t(T),\]
and all puts not satisfying
\[\text{DF}_t(T)(K-F_t(T))^+\leq P_t(T,K)\leq \text{DF}_t(T)K.\] At this
points, puts are transformed into calls and all the following
computations are performed for call prices.

In order to get normalized historical prices, we compute synthetic
historical prices on a fixed grid of time-to-maturity and log-forward
moneyness. At this aim, we observe that market data is generally dense
around the ATM point for short maturities and it spreads out with
increasing expiry. For this reason, the most cunning fixed grid should
be in delta, so we identify \(17\) delta points from \(0.015\) to
\(0.985\) and compute the corresponding log-forward moneyness for a
symbolic volatility of \(0.1\), over the grid of time-to-maturities of
\(2\), \(5\), \(10\), \(21\), \(42\), \(63\), \(126\) and \(252\) days.
Observe that in this way the grid is not constant in log-forward
moneyness for different time-to-maturities. At this point, we firstly
compute implied volatilities based on real prices. Then, we interpolate
linearly on the log-forward moneyness direction since data is dense
enough. The interpolation on the time-to-maturity direction is done on
the implied total variances (squared implied volatilities times the
time-to-maturity) adding synthetic points for the zero maturity equal to
\(0\) and then interpolating linearly. The interpolated prices are then
corrected to avoid static arbitrage as described in
\cite{cohen2020detecting}.

For comparison, we implemented a second interpolation scheme firstly
adding synthetic points for the zero maturity (setting prices equal to
their intrinsic values) and for extreme moneyness (with prices equal to
the discounted forward on the left and null prices on the right); then
normalizing all prices by their discounted forward; finally using
monotonic cubic splines on the log-forward moneyness direction and
linear interpolation on the time-to-maturity direction. The final
results reported in this section do not significantly change.

	We consider two portfolios: the first one is an ATM calendar spread
between maturities of \(1\)M and \(6\)M; while the second one is a
butterfly spread on maturity \(3\)M and moneyness \(0.9\), \(1\),
\(1.1\). We compute the VaR for an MPOR horizon of \(1\) day and a
confidence level \(\theta=0.99\).

	The short-term model-free VaR is obtained computing:

\begin{enumerate}
\item The spot volatility $\beta_t$ via a EWMA volatility algorithm with decay factor $0.97$ on spot log-returns, divided by the square-root of the daily step;
\item The correlation $\rho_t$ via a EWMA correlation on spot log-returns and ATM implied volatility absolute returns;
\item The vol-of-vol $\zeta_t(k,\tau)$ as $F$ times the $\text{1M ATM}$ vol-of-vol obtained as a EWMA volatility on implied volatility absolute returns, divided by the square root of the daily step. In order to be conservative enough, the $F$ factor is $1.1$ times the quantile $0.9$ of $5$ years history of ratios between the $(k,\tau)$ vol-of-vol and the $\text{1M ATM}$ vol-of-vol;
\item Delta and vega quantities as the Black-Scholes deltas and vegas evaluated at the option implied volatility;
\item The derivative of the smile with respect to the log-forward moneyness as the derivative of the interpolated smile via B-splines.
\end{enumerate}

Since we consider log-normal spot returns, we use the VaR formulation
with normal quantiles of \cref{eqModelFreeVaR}.

	The FHS risk factors are the spot prices and the \(17\times 8\) implied
volatility grid points for fixed log-forward moneyness and
time-to-maturity. We consider log-returns for the former risk factors
and absolute returns for the latter ones. The volatility of risk
factors' returns is computed via EWMA with decay factor \(0.97\). The
future discount factors and forward values are obtained under the
assumption of constant risk-free rates in the MPOR horizon.

	We backtest the short-term model-free VaR and the FHS VaR against the
synthetic P\&Ls as explained in the second approach of
\cref{backtesting-option-portfolios}.

	\Cref{figureFHSNM} shows the VaR patterns for the tested portfolios. We
can see that the short-term model-free VaR has more breaches then the
FHS VaR, however these are of small size and can be entirely removed
setting a vol-of-vol factor \(F=1\). Alternatively, one could consider
the t-Student framework described in
\cref{t-student-short-term-model-free-var-formulation}. The most
noticeable feature of the short-term model-free VaR is its regularity
compared to the FHS VaR. In particular, the short-term model-free VaR
behaves as we expect after large moves in realized P\&Ls, and it also
softens its behavior, without big jumps. On the contrary, the FHS VaR is
not as consistent (and sometimes seems to move without following market
patterns).

Furthermore, the short-term model-free VaR looks more smooth, i.e.~less
procyclical. To prove this sentence, we compute the peak-to-trough ratio
on the whole \(2019\) dates and the average \(n\)-day procyclicality
measure (in percentage) for \(n\) equal to \(1\), \(5\), \(10\) and
\(20\) days. In particular, the two quantities are computed as
\begin{align*}
\text{Peak-to-trough} &= \frac{\max_t\bigl(-\text{VaR}_{0.99,t}(h)\bigr)}{\min_t\bigl(-\text{VaR}_{0.99,t}(h)\bigr)}\\
\text{$n$-day $\%$} &= \text{max}_t\biggl(\frac{-\text{VaR}_{0.99,t}(h)}{-\text{VaR}_{0.99,t-n}(h)}-1\biggr)\times 100
\end{align*} where \(t\) ranges in the whole \(2019\) and we choose an
MPOR \(h=1\). Results are displayed in \cref{tableFHSNM}. We see that
except for the \(1\)-day procyclicality measure in the calendar spread
portfolio, all other procyclicality measures for both portfolios are
largely smaller for the short-term model-free VaR, i.e.~the latter model
is less procyclical than the FHS VaR.

\begin{figure}
	\hspace*{-1cm}
	\begin{subfigure}{.7\textwidth}
		\includegraphics[width=.8\linewidth]{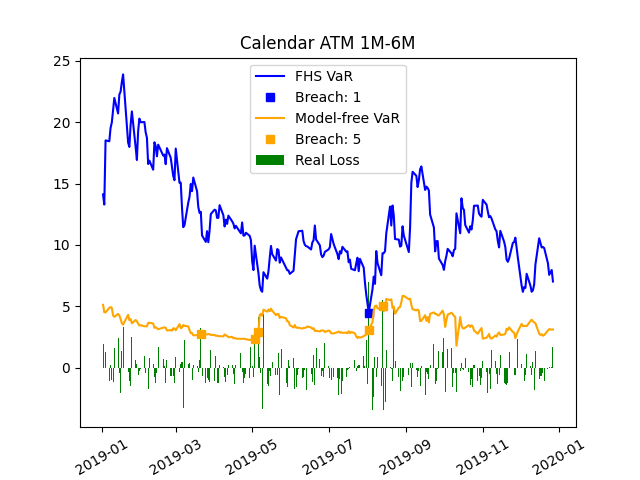}
	\end{subfigure}\hspace{-2.8cm}%
	\begin{subfigure}{.7\textwidth}
		\includegraphics[width=.8\linewidth]{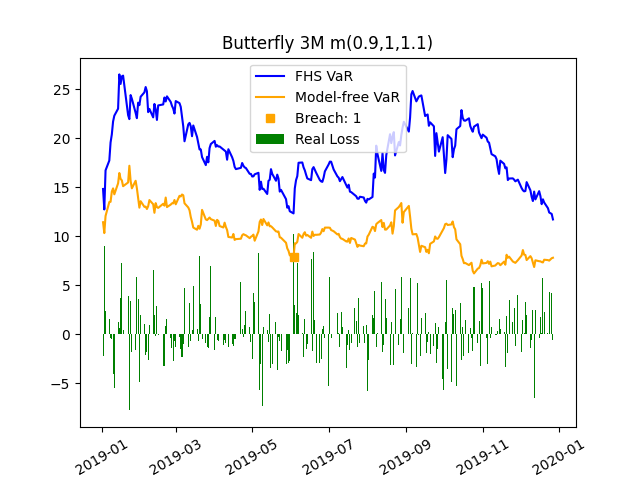}
	\end{subfigure}
	\caption{Margins obtained with the FHS algorithm (blue) and the short-term model-free VaR (orange), for a calendar spread ATM $1$M-$6$M portfolio (left) and a butterfly spread $3$M with moneyness $(0.9,1,1.1)$ portfolio (right).}
	\label{figureFHSNM}
\end{figure}

	\begin{table}
    \centering
    \begin{tabular}{ |c|c|c||c|c| }
    \hline
    & \multicolumn{2}{|c|}{Calendar ATM $1$M-$6$M} & \multicolumn{2}{|c|}{Butterfly $3$M m$(0.9,1,1.1)$}\\
    \hline
    & FHS & Short-term model-free & FHS & Short-term model-free\\
    \hline  
    \text{Peak-to-trough} & $5.32$ & $3.27$ & $2.26$ & $2.77$\\
    \text{$1$-day $\%$} & $42.17$ & $45.38$ & $31.34$ & $17.95$\\
    \text{$5$-day $\%$} & $90.11$ & $89.04$ & $70.47$ & $44.23$\\
    \text{$10$-day $\%$} & $143.88$ & $109.28$ & $106.69$ & $52.17$\\
    \text{$20$-day $\%$} & $139.75$ & $118.53$ & $90.35$ & $44.56$\\
    \hline
    \end{tabular}
    \caption{Comparison between FHS VaR and short-term model-free VaR peak-to-trough ratio and average percentage $n$-day procyclicality measure for a calendar spread and a butterfly spread portfolios.}
    \label{tableFHSNM}
\end{table}

	\hypertarget{practical-implementation-of-the-neural-sde-model}{%
\subsection{Practical implementation of the neural-SDE
model}\label{practical-implementation-of-the-neural-sde-model}}

	In this section we consider the neural-SDE model for normalized option
prices, and in particular we compare VaR estimations obtained as
empirical quantiles on simulations (see
\cref{empirical-var-in-the-neural-sde-model}) and VaR values resulting
from the approximated closed formula of \cref{propFinal}.

	We use the same data as in
\cref{coverage-performances-of-the-short-term-model-free-var} and
compute forward and discount factors similarly.

Before calibrating the \(G\) factors on the time-to-maturity,
log-forward moneyness grid identified in
\cref{coverage-performances-of-the-short-term-model-free-var},
historical prices must be interpolated on such a grid. With this aim, we
use the same interpolation/extrapolation algorithm consisting in implied
volatility's linear interpolation on the space dimension and total
variances' linear interpolation in the time direction.

	Now, for a fixed date \(t\), the past \(5\) years historical data
(\(1260\) observations) is used to calibrate the \(G\) and \(\xi\)
factors. We choose to calibrate the factors \(\xi_s\) for \(s\leq t\) in
the most efficient way, only looking at the statistical accuracy. In
particular, for the fixed time-to-maturity and log-forward moneyness
grid, we choose \(G_0\) as the average historical prices and the
remaining \(G_i\) as the principal components of the residuals between
prices and values of \(G_0\). See \cite{cohen2021arbitrage} for a
detailed description of the calibration algorithm. The calibration code
that we use is the one implemented in the Github repository of the cited
article.\footnote{\url{https://github.com/vicaws/neuralSDE-marketmodel}}
Based on calibration accuracy and process time, we decide to take \(2\)
statistical accuracy factors \(\xi_s\).

At this point we have the constant factors \(G\) on the fixed grid and
an history of factors \(\xi_s\). Furthermore, the calibration uses
neural networks to estimate the distributions of \(S\) and \(\xi\) as in
\cref{eqSXi}. In particular, we have today's parameters \(\alpha_{t}\),
\(\beta_{t}\), \(\mu_{t}\) and \(\sigma_{t}\), and for any value of
\(S\) and \(\xi\), the neural network can predict the corresponding
parameters. Observe that we take the covariance matrix \(P_t=I_d\) as in
\cite{cohen2021arbitrage}.

	For future computations, the matrix \(G\) has to be interpolated outside
the fixed time-to-maturity and log-forward moneyness grid. To do so, we
firstly interpolate normalized call prices on the target couple
\((\tau,k)\) for every historical past day. The interpolation is
performed as in the preparation of the initial database, computing
implied volatilities and interpolating them linearly on the space
direction and linearly in total variance on the time direction. Once all
history on \((\tau,k)\) is retrieved, the \(G(\tau,k)\) factors are the
intercept and the coefficients of the linear regression of prices along
the history of \(\xi\)s.

	At present, the two VaR calculation methodologies can be implemented. We
test the same two portfolios of
\cref{coverage-performances-of-the-short-term-model-free-var} consisting
of an ATM calendar spread \(1\)M-\(6\)M, and a butterfly spread on
maturity \(3\)M and moneyness \(0.9\), \(1\), \(1.1\). We consider an
MPOR of \(1\) day and a VaR confidence level \(\theta=0.99\).

For both VaR methodologies, we work under the assumption of constant
risk-free rates in the MPOR horizon. Then, the discount factor
\(\text{DF}_{t+h}(\tau)\) in \(h\) days on a time-to-maturity \(\tau\)
is equal to today's discount factor \(\text{DF}_{t}(\tau)\), while the
forward value \(F_{t+h}(\tau)\) in \(h\) days on a time-to-maturity
\(\tau\) becomes \(\frac{S_{t+h}}{S_{t}}F_{t}(\tau)\).

For the empirical VaR, simulations are performed under the hypothesis
that parameters \(\alpha\), \(\beta\), \(\mu\) and \(\sigma\) are
constant between \(t\) and \(t+h\), following Euler's scheme in
\cref{eqEulerScheme}. Starting with the estimation of today's
parameters, values of \(S_{t+h}\) and \(\xi_{t+h}\) are simulated
\(10000\) times. Future normalized prices are computed using the model
relation \cref{eqCReisModel} with the \(G\) factors evaluated on
time-to-maturity \(\tau-h\) and log-forward moneyness
\(k+\log\frac{F_t(\tau)}{F_{t+h}(\tau-h)}\), and the estimated values of
\(\xi_{t+h}\). Once the simulated normalized call prices are computed,
they are re-denormalized multiplying by \(\text{DF}_{t}(\tau-h)\) and
\(F_{t+h}(\tau-h)\). The final VaR is the \(1-\theta\) empirical
quantile of P\&Ls obtained as difference of simulated future prices and
today price.

In the case of VaR obtained via approximated closed formula, in order to
be consistent with the empirical VaR, the distribution of \(S_{t+h}\) is
taken to be a log-normal distribution, so that the used closed formula
coincides with \cref{eqVTheta}. Derivatives of the components of \(G\)
with respect to \(k\) are computed as the average between backward and
forward finite differences.

	We plot the percentage ratio between the absolute difference between the
two VaR estimations and the portfolio value along year \(2019\) in
\cref{figureReisinger}. We can see that the empirical VaR and the
approximated formula \cref{eqVTheta} for the VaR generally have a very
small error (about \(4\%\) for the calendar portfolio and \(1\%\) for
the butterfly portfolio), with some higher picks which could reach the
\(10\%\) of the portfolio. This is due to the fact that the approximated
formula is less procyclical than the empirical one and reacts slower to
market changes. All in all, the results confirm the consistency of
hypothesis in \cref{propFinal}.

\begin{figure}
	\hspace*{-1cm}
	\begin{subfigure}{.7\textwidth}
		\includegraphics[width=.8\linewidth]{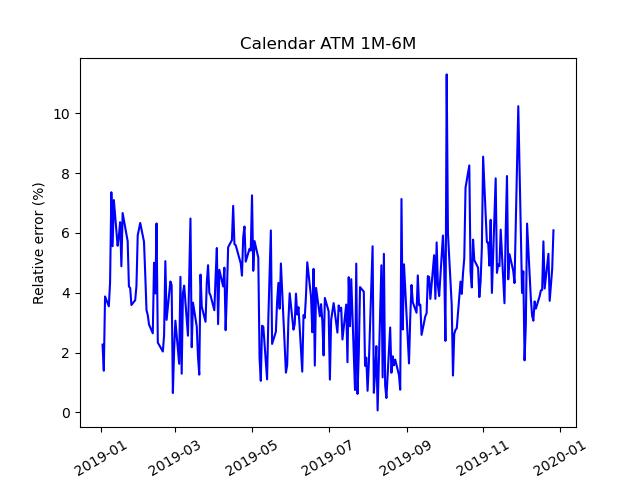}
	\end{subfigure}\hspace{-2.8cm}%
	\begin{subfigure}{.7\textwidth}
		\includegraphics[width=.8\linewidth]{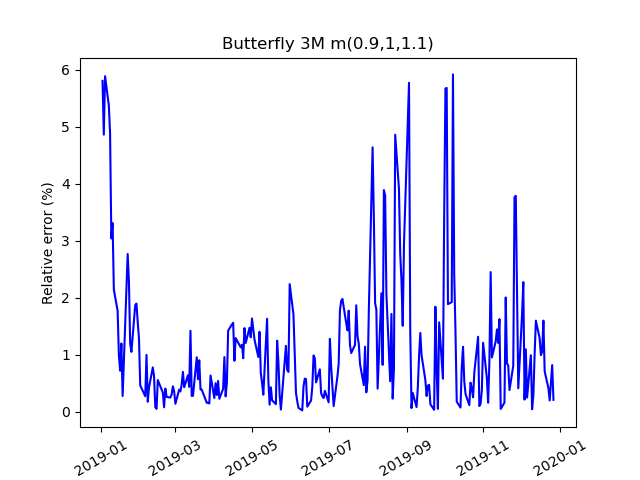}
	\end{subfigure}
	\caption{Percentage relative error between the empirical neural-SDE VaR calculated as in \cref{empirical-var-in-the-neural-sde-model} and the approximated closed formula VaR in \cref{propFinal}, for a calendar spread ATM $1$M-$6$M portfolio (left) and a butterfly spread $3$M with moneyness $(0.9,1,1.1)$ portfolio (right).}
	\label{figureReisinger}
\end{figure}

	\hypertarget{conclusion}{%
\section{Conclusion}\label{conclusion}}

	We summarize and analyze the methodologies that CCPs currently use for
the initial margin of option portfolios. In particular, we compute a
quasi-explicit formula for the VaR of option portfolios in the
neural-SDE model of \cite{cohen2021arbitrage}, and propose a closed
asymptotic short-term model-free formula for the VaR at small time
horizons.

Based on the numerical experiments that we conduct, we are confident
that this new short-term model-free formula could be considered as a
candidate for the core component of IM methodologies for option
portfolios, duly complemented by Short Option Minimum and Stress
Historical VaR components.

	\pagebreak
\appendix

	\hypertarget{proof-of-lemma}{%
\section{\texorpdfstring{Proof of lemma
\ref{lemmaVaRSqrtH}}{Proof of lemma }}\label{proof-of-lemma}}

	Let us consider the distribution of \(\frac{\text{P\&L}}{\sqrt h}\)
given in \cref{eqPnL,eqEulerScheme}. We write
\(W_{0,t+h}-W_{0,t}=\sqrt{h}Y\) and \(W_{t+h}-W_{t}=\sqrt h X\) where
\(Y\) is a standard Gaussian random variable and \(X\) is a
\(d\)-dimensional Gaussian random variable with correlation matrix
\(P_t\) not depending on \(h\).

\begin{normalsize}
We first consider the term $\frac{A(h,S_{t+h})}{\sqrt{h}}$ and look at its limit for $h$ going to $0$. Since $S_{t+h}$ goes to $S_t$, both the numerator and the denominator go to $0$. We then use L'H\^opital's rule to develop the limit. The derivative of $S_{t+h}$ with respect to $h$ is
\end{normalsize}

\[S_{t+h}\biggl(\alpha_t-\frac{\beta_t^2}2 + \frac{\beta_t}{2\sqrt h}Y\biggr)\]
so that the derivative of \(A(h,S_{t+h})\) with respect to \(h\) is
\begin{align*}
&S_{t+h}\Biggl[\Biggl(\alpha_t-\frac{\beta_t^2}2+\frac{\beta_t}{2\sqrt{h}}Y\Biggr)\sum_i\pi_i\text{DF}_t(T_i-(t+h))f(T_i-(t+h))\biggl(G_0\Bigl(T_i-(t+h),\log\frac{K_i}{f(T_i-(t+h))S_{t+h}}\Bigr) +\\
&+G\Bigl(T_i-(t+h),\log\frac{K_i}{f(T_i-(t+h))S_{t+h}}\Bigr)\cdot\xi_t - \mathbbm{1}_{\text{Puts}}(i)\biggr) +\\
+&\sum_i\pi_i\frac{d}{dh}\Bigl(\text{DF}_t(T_i-(t+h))f(T_i-(t+h))\Bigr)\biggl(G_0\Bigl(T_i-(t+h),\log\frac{K_i}{f(T_i-(t+h))S_{t+h}}\Bigr) +\\
&+ G\Bigl(T_i-(t+h),\log\frac{K_i}{f(T_i-(t+h))S_{t+h}}\Bigr)\cdot\xi_t - \mathbbm{1}_{\text{Puts}}(i)\biggr) +\\
-& \sum_i\pi_i\text{DF}_t(T_i-(t+h))f(T_i-(t+h))\biggl(\partial_\tau G_0\Bigl(T_i-(t+h),\log\frac{K_i}{f(T_i-(t+h))S_{t+h}}\Bigr) +\\
&+\partial_\tau G\Bigl(T_i-(t+h),\log\frac{K_i}{f(T_i-(t+h))S_{t+h}}\Bigr)\cdot\xi_t\biggr) +\\
-& \sum_i\pi_i\text{DF}_t(T_i-(t+h))f(T_i-(t+h))\biggl(\partial_k G_0\Bigl(T_i-(t+h),\log\frac{K_i}{f(T_i-(t+h))S_{t+h}}\Bigr) +\\
&+\partial_k G\Bigl(T_i-(t+h),\log\frac{K_i}{f(T_i-(t+h))S_{t+h}}\Bigr)\cdot\xi_t\biggr)\times\\
&\times\Biggl(-\frac{\partial_\tau f(T_i-(t+h))}{f(T_i-(t+h))} + \alpha_t - \frac{\beta_t^2}2 + \frac{\beta_t}{2\sqrt h}Y\Biggr)\Biggr]
\end{align*} where \(\mathbbm{1}_{\text{Puts}}(i)\) is \(1\) if the
index \(i\) refers to a put, otherwise it is null. This quantity
explodes for \(h\) going to \(0\) with speed
\(\frac{\gamma Y}{2\sqrt h}\) where \begin{align*}
\gamma &= S_t\beta_t \sum_i\pi_i\text{DF}_t(T_i-t)f(T_i-t)\times\\
&\times\Bigl[G_0\Bigl(T_i-t,\log\frac{K_i}{f(T_i-t)S_t}\Bigr) + G\Bigl(T_i-t,\log\frac{K_i}{f(T_i-t)S_t}\Bigr)\cdot\xi_t  - \mathbbm{1}_{\text{Puts}}(i) +\\
&- \partial_k\Bigl(G_0\Bigl(T_i-t,\log\frac{K_i}{f(T_i-t)S_t}\Bigr) + G\Bigl(T_i-t,\log\frac{K_i}{f(T_i-t)S_t}\Bigr)\cdot\xi_t\Bigr)\Bigr].
\end{align*}

This means that the ratio \(\frac{A(h,S_{t+h})}{\sqrt{h}}\) tends to
\(\gamma Y\), where \(\gamma\) does not depend on \(h\).

	We now look at the term
\(B(h,S_{t+h})\cdot\frac{\xi_{t+h}-\xi_t}{\sqrt h}\). Firstly, the limit
of \(B(h,S_{t+h})\) for \(h\) going to \(0\) is
\[S_t\sum_i\pi_i\text{DF}_t(T_i-t)f(T_i-t)G\Bigl(T_i-t,\log\frac{K_i}{f(T_i-t)S_t}\Bigr)\]
which is simply a sum of call surfaces and, in general, is different
from \(0\). Secondly, the ratio \(\frac{\xi_{t+h}-\xi_t}{\sqrt h}\) is
equal to \(\mu_t \sqrt{h} + \sigma_t\cdot X\) and goes to
\(\sigma_t\cdot X\) when \(h\) tends to \(0\).

	All in all, \(\frac{\text{P\&L}}{\sqrt h}\) tends to
\(\gamma Y + B(0,S_t)\cdot\sigma_t\cdot X\) almost surely, hence in law.
Then, since the cumulative density function of the random variable
\(\gamma Y + B(0,S_t)\cdot\sigma_t\cdot X\) has a continuous inverse,
all the quantiles of \(\frac{\text{P\&L}}{\sqrt h}\) converge to the
corresponding quantiles of \(\gamma Y + B(0,S_t)\cdot\sigma_t\cdot X\)
as \(h\) tends to \(0\).

The \(h\)-days VaR with confidence level \(\theta\) is the value of
\(v(\theta,h)\) solving \begin{align*}
1-\theta &= P(\text{P\&L}\leq v(\theta,h))\\
&= P\biggl(\frac{\text{P\&L}}{\sqrt h}\leq \frac{v(\theta,h)}{\sqrt h}\biggr).
\end{align*} From the above discussion, we have
\(v(\theta,h) = u(\theta)\sqrt{h} + o(\sqrt h)\), where
\(u(\theta)=F_{\gamma Y + B(0,S_t)\cdot\sigma_t\cdot X}^{-1}(1-\theta)\).

	\hypertarget{proof-of}{%
\section{\texorpdfstring{Proof of
\cref{propFinal}}{Proof of }}\label{proof-of}}

	From the definition in \cref{eqEulerScheme}, \(S_{t+h}\) has a
log-normal distribution with density
\[p_{S_{t+h}}(s) = \frac{1}{s\beta_t\sqrt{2\pi h}}\exp\Biggl(-\Biggl(\frac{\log\frac{s}{S_t} - \bigl(\alpha_t-\frac{\beta_t^2}2\bigr) h}{\beta_t\sqrt{2h}}\Biggr)^2\Biggr).\]
We look at the RHS of \cref{eqIntfF} when \(h\) goes to \(0\).

We can re-write the integral using a change of variable
\(y=\frac{\log\frac{s}{S_t} - \bigl(\alpha_t-\frac{\beta_t^2}2\bigr) h}{\beta_t\sqrt{h}}\)
as
\[\int_{-\infty}^\infty \frac{1}{\sqrt{2\pi}}\exp\biggl(-\frac{y^2}2\biggr)\Phi\biggl(\frac{v(\theta,h)-\hat A(h,s(h,y))}{\lVert\hat B(h,s(h,y))\rVert_2}\biggr)\,dy\]
where
\(s(h,y)=S_t\exp\bigl(y\beta_t\sqrt h + \bigl(\alpha_t-\frac{\beta_t^2}2\bigr) h\bigr)\).
Observe that the integrand is dominated by the integrable function
\(\frac{1}{\sqrt{2\pi}}\exp\bigl(-\frac{y^2}2\bigr)\).

If we prove that for \(h\) going to \(0\) the integrand converges
pointwise to a function of the form
\(\Phi\bigl(\frac{-c_ty + u(\theta)}{q_t}\bigr)\) where \(c_t\) and
\(q_t\) do not depend on \(y\), for the Lebesgue's dominated convergence
theorem the whole integral converges to
\begin{equation}\label{eqIntPhiH0}
\int_{-\infty}^\infty \frac{1}{\sqrt{2\pi}}\exp\biggl(-\frac{y^2}2\biggr)\Phi\biggl(\frac{-c_ty + u(\theta)}{q_t}\biggr)\,dy.
\end{equation}

	Assuming the proof of the convergence is done (see
\cref{proof-of-the-pointwise-convergence}), we can pick-up a pair of
independent standard Gaussian random variables \(X, Y\) and write the
latter expression as
\[E\biggl[\mathbbm 1\biggl(X\leq\frac{-c_tY + u(\theta)}{q_t}\biggr)\biggr].\]
The random variable \(Z=\frac{q_tX+c_tY}{\sqrt{c_t^2 +q_t^2}}\) has a
standard normal distribution, so that the latter quantity is equal to
\[E\biggl[\Phi\biggl(\frac{u(\theta)}{\sqrt{c_t^2 + q_t^2}}\biggr)\biggr] = \Phi\biggl(\frac{u(\theta)}{\sqrt{c_t^2+q_t^2}}\biggr).\]

Then, we can finally recover the expression of the initial margin
\(\text{VaR}_{\theta,t}(h)\) as in \cref{eqVTheta}.

	\hypertarget{proof-of-the-pointwise-convergence}{%
\subsection{Proof of the pointwise
convergence}\label{proof-of-the-pointwise-convergence}}

	The pointwise convergence of
\[\Phi\biggl(\frac{v(\theta,h)-\hat A(h,s(h,y))}{\lVert\hat B(h,s(h,y))\rVert_2}\biggr)\]
to a function of the form
\(\Phi\bigl(\frac{-c_ty + u(\theta)}{q_t}\bigr)\) can be proved firstly
observing that we can work under hypothesis of continuous functions,
given that normalized prices can be considered to be continuous in
time-to-maturity and log-forward moneyness. Also, since we are looking
at the limit when \(h\) is small, we can use \cref{lemmaVaRSqrtH} and
substitute \(v(\theta,h)\) with \(u(\theta)\sqrt h + o(\sqrt{h})\).

Firstly observe that the matrix \(b_t\) does not depend on \(h\), and
\(\lVert \hat B(h,s(h,y))\rVert_2 = \sqrt{h}\lVert B(h,s(h,y))\cdot\sigma_t\cdot b_t\rVert_2\)
where
\[B(h,s(h,y)) = s(h,y)\sum_i\pi_i\text{DF}_t(T_i-(t+h))f(T_i-(t+h))G\Bigl(T_i-(t+h),\log\frac{K_i}{f(T_i-(t+h))s(h,y)}\Bigr).\]
As shown in \cref{proof-of-lemma}, the limit of \(B(h,s(h,y))\) for
\(h\) going to \(0\) is
\[S_t\sum_i\pi_i\text{DF}_t(T_i-t)f(T_i-t)G\Bigl(T_i-t,\log\frac{K_i}{f(T_i-t)S_t}\Bigr)\]
which is different from \(0\). Then, the ratio
\(\frac{u(\theta)\sqrt h + o(\sqrt{h})}{\lVert\hat B(h,s(h,y))\rVert_2}\)
goes to
\[\frac{u(\theta)}{\lVert B(0,S_t)\cdot\sigma_t\cdot b_t\rVert_2}=\frac{u(\theta)}{q_t}\]
in \(0\).

	We now consider the ratio
\(\frac{\hat A(h,s(h,y))}{\sqrt{h}\lVert B(h,s(h,y))\cdot\sigma_t\cdot b_t\rVert_2}\).
The function \(\hat A(h,s(h,y))\) is in turn the sum between
\(A(h,s(h,y))\) and \(B(h,s(h,y))\cdot m_t(s(h,y))\). The latter term is
equal to
\(B(h,s(h,y))\cdot(\mu_th + y\sigma_t\cdot P_{S,\xi,t}\sqrt{h})\), so
that its ratio with
\(\sqrt{h}\lVert B(h,s(h,y))\cdot\sigma_t\cdot b_t\rVert_2\) goes to
\(\frac{B(0,S_t)\cdot\sigma_t\cdot P_{S,\xi,t}}{q_t}y\).

\begin{normalsize}
We shall now focus on the ratio $\frac{A(h,s(h,y))}{\sqrt{h}\lVert B(h,s(h,y))\cdot\sigma_t\cdot b_t\rVert_2}$. Since both the numerator and denominator go to $0$ with $h$, we use L'H\^opital's rule to develop the limit.
\end{normalsize}

The derivative of \(B(h,s(h,y))\) with respect to \(h\) is
\begin{align*}
&s(h,y)\Biggl[\Biggl(\frac{y\beta_t}{2\sqrt{h}} + \alpha_t-\frac{\beta_t^2}2\Biggr)\sum_i\pi_i\text{DF}_t(T_i-(t+h))f(T_i-(t+h))G\Bigl(T_i-(t+h),\log\frac{K_i}{f(T_i-(t+h))s(h,y)}\Bigr) +\\
+& \sum_i\pi_i\frac{d}{dh}\Bigl(\text{DF}_t(T_i-(t+h))f(T_i-(t+h))\Bigr)G\Bigl(T_i-(t+h),\log\frac{K_i}{f(T_i-(t+h))s(h,y)}\Bigr) +\\
-& \sum_i\pi_i\text{DF}_t(T_i-(t+h))f(T_i-(t+h))\partial_\tau G\Bigl(T_i-(t+h),\log\frac{K_i}{f(T_i-(t+h))s(h,y)}\Bigr) +\\
-& \sum_i\pi_i\text{DF}_t(T_i-(t+h))f(T_i-(t+h))\partial_k G\Bigl(T_i-(t+h),\log\frac{K_i}{f(T_i-(t+h))s(h,y)}\Bigr)\times\\
\times&\Biggl(-\frac{\partial_\tau f(T_i-(t+h))}{f(T_i-(t+h))} + \frac{y\beta_t}{2\sqrt h} + \alpha_t - \frac{\beta_t^2}2\Biggr)\Biggr]
\end{align*} and for \(h\) going to \(0\), it explodes with a speed of
\[\frac{1}{2\sqrt h}S_ty\beta_t \sum_i\pi_i\text{DF}_t(T_i-t)f(T_i-t)\Bigl(G\Bigl(T_i-t,\log\frac{K_i}{f(T_i-t)S_t}\Bigr) - \partial_kG\Bigl(T_i-t,\log\frac{K_i}{f(T_i-t)S_t}\Bigr)\Bigr).\]

	Similarly, the derivative of \(A(h,s(h,y))\) with respect to \(h\)
explodes with a speed of \begin{align*}
&\frac{1}{2\sqrt h}S_ty\beta_t \sum_i\pi_i\text{DF}_t(T_i-t)f(T_i-t)\times\\
&\times\Bigl[G_0\Bigl(T_i-t,\log\frac{K_i}{f(T_i-t)S_t}\Bigr) + G\Bigl(T_i-t,\log\frac{K_i}{f(T_i-t)S_t}\Bigr)\cdot\xi_t  - \mathbbm{1}_{\text{Puts}}(i) +\\
&- \partial_k\Bigl(G_0\Bigl(T_i-t,\log\frac{K_i}{f(T_i-t)S_t}\Bigr) + G\Bigl(T_i-t,\log\frac{K_i}{f(T_i-t)S_t}\Bigr)\cdot\xi_t\Bigr)\Bigr].
\end{align*}

Doing the derivative of
\(\sqrt{h}\lVert B(h,s(h,y))\cdot\sigma_t\cdot b_t\rVert_2\), we find
\[\frac{1}{2\sqrt h}\lVert B(h,s(h,y))\cdot\sigma_t\cdot b_t\rVert_2 + \sqrt{h}\frac{\bigl(\frac{d}{dh}B(h,s(h,y))\cdot\sigma_t\cdot b_t\bigr)^T\cdot\bigl(B(h,s(h,y))\cdot\sigma_t\cdot b_t\bigr)}{\lVert B(h,s(h,y))\cdot\sigma_t\cdot b_t\rVert_2},\]
and this explodes with the first term.

\begin{normalsize}
All in all, L'H\^opital's rule shows that the limit of $\frac{\hat A(h,s(h,y))}{\sqrt{h}\lVert B(h,s(h,y))\cdot\sigma_t\cdot b_t\rVert_2}$ for $h$ going to $0$ is $\frac{c_t}{q_t}y$.
\end{normalsize}


\newpage \bibliography{Biblio}
\bibliographystyle{plain}

\end{document}